\documentclass[11pt]{article}   
\usepackage{amsmath}\usepackage{hyperref}\usepackage{wasysym}
\usepackage[font=footnotesize]{caption}\usepackage[T1]{fontenc}\usepackage{latexsym}
\usepackage[english]{babel}\usepackage{cite}\usepackage{enumerate}
\usepackage{bbold}\usepackage[shortlabels]{enumitem}\usepackage[utf8]{inputenc}
\usepackage{dsfont}\usepackage{diagbox}\usepackage{amsfonts}\usepackage{mathtools}      
\usepackage{amssymb}\usepackage{euscript}\usepackage{braket}\usepackage{amssymb}
\usepackage{starfont}\usepackage{color,soul}\usepackage{braket}\usepackage{tensor}        
\usepackage{amsthm}\usepackage{graphicx}\usepackage{slashed}\usepackage{leftidx}
\usepackage{subfigure}\usepackage{bbm}\usepackage{empheq}\usepackage{color}
\usepackage{amsfonts}\usepackage{amssymb}\usepackage{graphicx}
\usepackage{amssymb,epsfig,subfigure}\usepackage{amssymb}\usepackage{comment}
\usepackage[T1]{fontenc}\usepackage{latexsym}\usepackage{cancel}
\usepackage{titlesec}\usepackage[numbers,sort&compress]{natbib}  
\usepackage{float}\usepackage[dvipsnames]{xcolor}
\usepackage[header,title,page,titletoc]{appendix}  
\usepackage[numbers,sort&compress]{natbib}  
\usepackage[paper=a4paper,margin=1in]{geometry}  
\allowdisplaybreaks[4]   
\parskip=.9\baselineskip
\numberwithin{equation}{section}
\makeatletter
\renewcommand\section{\@startsection {section}{1}{\z@}%
                                 {-3.5ex \@plus -1ex \@minus -.2ex}
                                   {2.3ex \@plus.2ex}%
                                   {\normalfont\large\bfseries}}
\renewcommand\subsection{\@startsection{subsection}{2}{\z@}%
                                   {-3.25ex\@plus -1ex \@minus -.2ex}%
                                     {1.5ex \@plus .2ex}%
                                     {\normalfont\bfseries}}
\renewcommand\subsubsection{\@startsection{subsubsection}{3}{\z@}%
                                   {-3.25ex\@plus -1ex \@minus -.2ex}%
                                     {1.5ex \@plus .2ex}%
                                     {\normalfont\itshape}}
\makeatother
\def\pplogo{\vbox{\kern-\headheight\kern -29pt
\halign{##&##\hfil\cr&{\ppnumber}\cr\rule{0pt}{2.5ex}&\ppdate\cr}}}
\makeatletter
\def\ps@firstpage{\ps@empty \def\@oddhead{\hss\pplogo}%
  \let\@evenhead\@oddhead 
}
\thispagestyle{plain}
\def\maketitle{\par
 \begingroup
 \def\thefootnote{\fnsymbol{footnote}}
 \def\@makefnmark{\hbox{$^{\@thefnmark}$\hss}}
 \if@twocolumn
 \twocolumn[\@maketitle]
 \else \newpage
 \global\@topnum\z@ \@maketitle \fi\thispagestyle{firstpage}\@thanks
 \endgroup
 \setcounter{footnote}{0}
 \let\maketitle\relax
 \let\@maketitle\relax
 \gdef\@thanks{}\gdef\@author{}\gdef\@title{}\let\thanks\relax}
\makeatother            
\definecolor{rossocorsa}{rgb}{0.83, 0.0, 0.0}
\definecolor{navyblue}{rgb}{0.0, 0.0, 0.5}
\hypersetup{
    colorlinks,
    citecolor=rossocorsa,
    filecolor=navyblue,
    linkcolor=navyblue,
    urlcolor=navyblue
} 

\newcommand\bea{\begin{eqnarray}}\newcommand\eea{\end{eqnarray}}
\newcommand{\be}{\begin{equation}}\newcommand{\ee}{\end{equation}}
\newcommand{\ba}{\begin{align}}\newcommand{\ea}{\end{align}}

\def\lan{\langle}
\def\ran{\rangle}

\def\a{\alpha}

\def\epsilon{\varepsilon}

\def\phi{\varphi}

\def\th{\theta}

\newtheorem{theorem}{Theorem}

\newtheorem{proposition}[theorem]{Proposition}

\def\2#1{{\cal #1}}


\begin{document} 

\begin{titlepage}

\begin{center}

\phantom{ }
\vspace{1cm}
{\bf \Large{Modular invariance as completeness}}
\vskip 1cm
Valentin Benedetti${}^{a}$, Horacio Casini${}^{b}$,  Yasuyuki Kawahigashi${}^{c}$, \\ Roberto Longo${}^{d}$, Javier M. Mag\'an${}^{e}$
\vskip 0.05in
\small{ ${}^{a,b,e}$\textit{Instituto Balseiro, Centro At\'omico Bariloche}}
\vskip -.4cm
\small{\textit{ 8400-S.C. de Bariloche, R\'io Negro, Argentina}}
{\sc}\\
\vskip 0.05in
\small{${}^{c}$\textit{Department of Mathematical Sciences, the University of Tokyo}}
\vskip -.4cm
\small{\textit{Komaba, Tokyo, 153-8914, Japan}}
{\sc}\\
\vskip 0.05in
\small{${}^{d}$\textit{Dipartimento di Matematica,
Universit\`a di Roma ``Tor Vergata''}}
\vskip -.4cm
\small{\textit{Via della Ricerca Scientifica, I-00133 Roma, Italy}}
{\sc}\\

\begin{abstract}
We review the physical meaning of modular invariance for unitary conformal quantum field theories in $d=2$. For QFT models,  while $T$ invariance is necessary for locality, $S$ invariance is not mandatory. $S$ invariance is a form of completeness of the theory that has a precise meaning as Haag duality for arbitrary multi-interval regions. We present a mathematical proof as well as derive this result from a physical standpoint using Renyi entropies and the replica trick. For rational CFT's, the failure of modular invariance or Haag duality can be measured by an index, related to the quantum dimensions of the model. We show how to compute this index from the modular transformation matrices. The index also appears in a limit of the Renyi mutual informations. Cases of infinite index are briefly discussed. Part of the argument can be extended to higher dimensions, where the lack of completeness can also be diagnosed using the CFT data through the thermal partition function and measured by an index. 
\end{abstract}
\end{center}

\small{\vspace{7.0 cm}\noindent ${}^{a}$valentin.benedetti@ib.edu.ar\\
${}^{b}$horaciocasini@gmail.com\\
${}^{c}$yasuyuki@ms.u-tokyo.ac.jp \\
${}^{d}$longo@mat.uniroma2.it \\
${}^{e}$javier.magan@cab.cnea.gov.ar 
}

\end{titlepage}

\setcounter{tocdepth}{2}

{\parskip = .4\baselineskip \tableofcontents}
\newpage


\section{Introduction}

Generalized symmetries (GS) describe certain general patterns that QFT may display and which are very useful for classification purposes or in the analysis of the phase structure of the theory.
They are usually described through ideas connected with the path integral formulation in non trivial manifolds. In their manifestations on the local physics, and in a model independent, real time formulation, generalized symmetries can be more precisely described as violations of Haag duality (HDV) \cite{Casini:2020rgj} (see also similar ideas in \cite{fredenhagen1982structural}). In concrete, let ${\cal A}(R)$ be the algebra of operators generated by local field operators in some topologically non trivial causal region $R$. Let $R'$ be the set of points spatially separated from $R$, the ``causal complement'' of $R$. By locality ${\cal A}(R)$ and ${\cal A}(R')$ commute with each other. More formally,  
\be 
{\cal A}(R)\subseteq ({\cal A}(R'))'\,, \label{ff}
\ee
where ${\cal A}'$ is the commutant of the algebra ${\cal A}$, i.e. the full set of operators that commute with ${\cal A}$. When the two algebras in (\ref{ff}) coincide, it is said there is Haag duality for $R$. 
A violation of this property implies there are operators that commute with the local operators in the complement of $R$ but cannot be generated by field operators in $R$ itself. These operators are then ``non-local'' in $R$, but still commute with local field operators in the complement. Several structural features are transparent in this formulation. For example,  when there are non local operators for $R$, there must exist dual non local operators for $R'$ too.  

In this light, Haag duality for general regions (related to the absence of generalized symmetries) is seen as a completeness property, implying that the algebras generated by local operators are the maximal ones compatible with causality \cite{Review}. When the theory is not complete in this sense, there are non local operators and HDV. However, note that the notion of non local operator is relative to a certain causal region $R$.\footnote{A causal region is the domain of dependence of a $d-1$-dimensional spatial region. Then a ball here is meant as the double cone determined by a $d-1$-dimensional spatial ball. In the following we use the same terminology of referring to $R$ by a spatial surface.}  A non local operator in $R$ is in general locally generated by field operators in topologically trivial regions such as balls.\footnote{In other words, this is the case if there is Haag duality for topological trivial regions such as balls. This fails for theories with spontaneously broken global symmetries \cite{roberts1974spontaneously}, though it can be repaired by taking charged operators in the algebra. Dilatation invariant theories are not spontaneously broken \cite{Roberts1974Dilatation}.}   For example, Wilson and 't Hooft loops can be locally generated in balls containing the loops, see \cite{Casini:2020rgj}.

This brings us to the question we want to address in this paper. If non local operators are ultimately locally generated in balls, then the description of a theory in terms of field operator data does not need new input in presence of GS/HDV. In particular, for a CFT, the usual bootstrap data and constraints must still mark the consistency requirement for the construction of a valid model. In this vein, it is natural to ask: What is the imprint left by GS/HDV on the bootstrap data? This important question is rather non trivial because it implies the understanding of non-local operators in terms of local ones. 

A simple example will clarify the difficulties of the problem. Suppose we have a consistent CFT in $d>2$ determined by the set of spins and scaling dimensions of the list of primaries $(\Delta_i,s_i)$, and the OPE coefficients $f_{ijk}$. This theory may have a global internal symmetry given by a group $G$. This should be simple to check, since the primaries must come in representations of this symmetry and $f_{ijk}$ must satisfy selection rules. However, if we consider only the CFT data for the neutral primary fields, we discover that this data is also perfectly consistent within this restricted set of primaries. Unitarity bounds and crossing relations will be satisfied in the neutral part of the algebra. Then the question is the converse of the above one: how to discover if certain consistent data is the neutral part of a larger one? Equivalently, how to understand if the CFT data can be extended in consistent manner? The neutral theory is a perfectly consistent unitary theory, but it is not complete in the above sense. The theory contains non local operators for regions formed by two disconnected balls. The non local operators are charge-anticharge pairs, with each charged operator having support in one of the two balls. This clearly commutes with fields outside of the two balls, but cannot be generated by neutral operators in each of the balls themselves. In the complementary region of the two balls (which has the topology of a spherical shell in general dimensions) the non local operators correspond to twists implementing the symmetry on one of the two balls only  \cite{Casini:2019kex,Casini:2020rgj}. This HDV exactly signals the lack of completeness. However, it still remains a challenge to express this idea in terms of the CFT data.

The primary objective of this article is to consider this problem in $2d$. In this restricted scenario, we will see that the problem of understanding HDV in terms of the bootstrap data can be solved in fair completion. Indeed, the theory is modular invariant if and only if it is complete. More precisely, it is invariant under the modular transformation $S$ if and only if it is complete, since invariance under the $T$ transformation is implied by locality. Then, $S$ modular invariance is ultimately a form of completeness. As such, it is generalized to higher dimensions as Haag duality for topologically non trivial regions \cite{Review}.

We note that modular invariance is usually invoked as one of the defining properties of 2d CFT's. Indeed it is behind many important applications. However, it has been difficult to justify on conceptual grounds, and some confusion still reigns around the subject. The $S$ transformation relates the partition function  $Z[a,b]\rightarrow Z[b,a]$ on a rectangular torus with lengths $a,b$. This looks like an evident geometrical symmetry for a path integral with a rotational invariant Lagrangian, and this is the origin of this symmetry for many statistical models. As such, this idea must also hold for massive theories. However, for the Euclidean description of a unitary QFT, this is not necessarily justified. It actually fails for perfectly sound unitary CFT's such as the Virasoro net, the theory generated by the stress tensor without further primaries. The reason is that the local fields integrated on the path integral may not be the actual physical fields of the theory, and some twistings in the time direction have to be enforced to project to the physically allowed states. Another common justification is that modular invariance acts as a discrete conformal invariance on top of the usual group connected to the identity. However, discrete symmetries are in general not implied by the continuous part of a group. 

That the $S$ symmetry is related to the absence of superselection sectors was first conjectured by Rehren \cite{R6}. Equivalently, given the analysis in \cite{kawahigashi2001multi}, this conjecture relates $S$ symmetry to the validity of Haag duality for general regions. This conjecture was later proved by  Y. Kawahigashi and R. Longo, and independently by M. M\"uger, but the proof remained unpublished, see \cite{muger2010superselection} for a short account. Our intention is then twofold. We first seek to present a mathematical proof of this connection, and we will do so in section \ref{secIV} for the case of rational 2d CFT models. This proof relies ultimately on the algebraic theory of superselection sectors \cite{Doplicher:1971wk,Doplicher:1973at,Doplicher:1990pn,Longo:1994xe}, to be briefly reviewed below. On the other hand, we also want to clarify this idea from a more physical point of view, in part with a view to possible generalizations to higher dimensions. In short, the physical, not mathematically rigorous, argument, is the following. Haag duality violations correspond to scenarios where the algebras ${\cal  A}(R), {\cal A}(R')$ for complementary regions are not commutants of one another. For a pure global state, Renyi entropies are equal for commutant algebras, rather than for complementary regions. Then HDV  should lead to vacuum Renyi entropies for complementary regions that are not equal to each other. This way we will see how to connect Renyi entropies, which are associated with replica partition functions, with the idea of completeness. Quite insightfully, for the particular case of $d=2$ CFT's and Renyi entropies of index $n=2$, the entropy for two intervals can be written in terms of the torus partition function \cite{furukawa2009mutual,calabrese2009entanglement,Headrick:2010zt}. It then follows that the absence of HDV sectors is reflected on the $S$ invariance of the partition function. It also follows we can relate HDV and completeness with aspects of the partition function alone. These results will be covered in sections \ref{secII} and \ref{secIII}.

Before starting we want to make some remarks. Paralleling the situation with group and subgroups, whenever we have an inclusion of algebras $\mathcal{A}\subset \mathcal{B}$, we have an associated index, the ``Jones index'' \cite{J,KOSAKI1986123,L11,L12}. This index measures the relative size of both algebras. It is well defined even for type II and III algebras, e.g. the properly infinite cases. From the present QFT perspective, this index can be applied to the inclusion \ref{ff}, providing a rigorous measure of the amount of HDV/GS. For 2d chiral nets, this index was studied in \cite{kawahigashi2001multi}. There it was directly related to the category of DHR superselection sectors, equivalently the modular tensor category associated with the chiral algebra. In this case, the index turns to be equal to the total quantum dimension of the category. Below we will recover this result using more standard Renyi entropies, in particular differences for Renyi entropies of complementary regions. These differences will also relate to differences of certain limits of torus partition functions. For similar reasons, the index also appears in the topological entanglement entropy associated with topological field theory in $d=3$ \cite{Kitaev:2005dm,Levin:2006zz}. These theories are described by the same modular tensor category as $d=2$ models.  The index of the inclusion \ref{ff} in higher dimensional applications, associated with different types of topologies for $R$, is in general given by the order of a group, and can also be obtained from a limit of entropies \cite{Casini:2020rgj,Magan:2020ake}.

Finally, the appearance of the index in certain limits of the thermal partition function allows to answer the original question, namely the relation between HDV/GS and bootstrap data, for the type of incompleteness related to standard global symmetries in higher dimensions as well. More concretely, this allows to diagnose if a model is the neutral part of another one under the action of an internal symmetry group by the knowledge of the spectrum asymptotics. 
We will discuss this in Sec. \ref{secV}. These developments leave open the interesting question of how HDV associated with non-trivial and not disconnected topologies (e.g. those HDV arising from Wilson/'t Hooft loops in four dimensions), is encoded in the bootstrap CFT data. We will briefly discuss this, along with further open problems and connections, in the discussion section \ref{SecVI}.

\section{Renyi entropies and modular invariance}
\label{secII}

Consider a $2d$ CFT on the space-time cylinder. For the algebra $\mathcal{A}(R)$ of a double cone associated with a spatial interval $R$ there is Haag duality \cite{buchholz1990haag,BGL}. Consider two double cones associated with two disjoint intervals $R=R_1\cup R_2$ at time $t=0$, and define the additive algebra of the union as the minimal algebra containing both, namely ${\cal A}(R)={\cal A}(R_1)\vee {\cal A}(R_2)$. This is the algebra generated by operators localized in $R_1$ and $R_2$. The causal complement $R'$ of $R$ is another two interval region in the surface $t=0$, $R'=R_3\cup R_4$. There is a violation of Haag duality for the two intervals if  ${\cal A}(R)\subset \hat{{\cal A}}(R)\equiv ({\cal A}(R'))'$ is a strict inclusion of algebras. This occurs when there are charged sectors for the theory (DHR sectors) because a charge-anticharge pair belongs to $\hat{{\cal A}}(R)$ but does not belong to the algebra of the two intervals ${\cal A}(R)$, see \cite{kawahigashi2001multi,Casini:2019kex} for detailed discussions.

Since HDV is a phenomenon associated with algebras, it is natural to look for signals of HDV in quantum information measures. These information measures typically input states and algebras and output numbers.   
General entropic order parameters for HDV have been introduced in \cite{Casini:2020rgj}. These are relative entropies whose definition is based on the existence of the two different algebras ${\cal A}(R)\subset ({\cal A}(R'))'$. See also \cite{longo2018relative,Xu:2018uxc}. In a certain pinching limit in which the appropriate size of the region goes to zero/infinity, the entropic order captures the index of the inclusion. As their name suggests, these entropic order parameters serve to characterize phases in QFT. In particular, in quite a unified fashion, they characterize the existence of symmetry and symmetry breaking, including confinement and other forms of GS, such as higher-form symmetries and the non-invertible symmetries that appear in 2d. The entropic order parameters obey a ``certainty relation'' that links the relative entropies on complementary regions \cite{Magan:2020ake,Casini:2019kex,hollands2020variational,Xu:2018uxc}. The connection of these relative entropies with modular invariance has been established in \cite{Xu:2018uxc}. Generalizations of these order parameters to Renyi-type relative entropies have been described \cite{hollands2020variational}. All these entropic quantities (entropic order parameters) are defined via relative entropy. This makes them directly applicable to QFT. In particular, they are UV finite and independent of the regularization techniques. But for our present purposes, it will be convenient to use other quantities directly related to ordinary Renyi entropies. These will allow us to connect HDV with partition functions and the CFT data through the replica trick.\footnote{At any rate, once we arrive at the precise connection between Renyi entropies, partition functions, and the quantum dimensions of the model, we can go back and relate them to the entropic order parameters, whose dependence on the quantum dimensions  has already been established \cite{Casini:2020rgj,longo2018relative,Xu:2018uxc,Magan:2020ake,Casini:2019kex,hollands2020variational}.}

\subsection{Renyi mutual information}

To start we can focus on the $n$-th Renyi mutual information for two intervals $R_1$ and $R_2$. This is defined by
\be
I_n(R_1,R_2)=S_n(R_1)+S_n(R_2)-S_n(R_1\cup R_2)\,,
\ee
with $S_n$ the $n$-th Renyi entropy. To compute the Renyi entropies a cutoff is introduced. The entropies diverge logarithmically with the cutoff by an accumulation of entanglement at the endpoint of the intervals. We do not need to specify the cutoff further. Cutoff details disappear from $I_n(R_1, R_2)$ in the continuum limit.  Then $I_n$ is a conformal invariant quantity that depends on the two intervals through a cross ratio. It will be convenient to map the cylinder to the plane, and the two intervals to $R_1=[a_1,b_1]$,  $R_2=[a_2,b_2]$ at $t=0$, with ordered $a_1<b_1<a_2<b_2$. For later convenience, we write the mutual Renyi entropy in the form 
\be
I_n(R_1,R_2)=I_n(x)= -\frac{(n+1)\, c}{6 n}\, \log(1-x)+ U_n(x)\,,\label{uno}
\ee
where $c$ is the central charge, and the $x$ is the cross ratio
\be
x=\frac{(b_1-a_1)(b_2-a_2)}{(a_2-a_1)(b_2-b_1)} \in (0,1)\,.
\ee
The limit $x\rightarrow 0$ corresponds to very separated intervals. In this limit, we have $I_n(x)\rightarrow 0$ by clustering of correlation functions. Then in this limit $U_n(x)\rightarrow 0$. 

With a cutoff in place, such as a lattice, the Renyi entropies are equal for commutant algebras in a global pure state. This gives the same Renyi entropies for ${\cal A}(R)$ and $\hat{{\cal A}}(R')$. If there is Haag duality for two intervals $\hat{{\cal A}}(R')={\cal A}(R')$, then one recovers the more usual identity $S_n(R)=S_n(R')$. Taking into account that the single interval entropy is \cite{calabrese2004entanglement}
\be
S_n(R)=\frac{(n+1)\, c}{6 n}\, \log(r/\epsilon)\,,
\ee
where $r$ is the interval length and $\epsilon$ the cutoff distance, and assuming $\hat{{\cal A}}(R')={\cal A}(R')\rightarrow S_n(R)=S_n(R')$, we then get the relation 
\be 
U_n(x)=U_n(1-x)\,,  \quad (\textrm{Haag duality for two intervals}).\label{hd}
\ee
The symmetry of this function $U(x)$ then signals Haag duality for two intervals in the Renyi mutual information $I_n$. This symmetry has been checked in several modular invariant theories \cite{furukawa2009mutual,Casini:2005rm,calabrese2011entanglement,alba2010entanglement,Headrick:2010zt}. But it fails in simple examples like a chiral scalar \cite{Arias:2018tmw} or subnets of free fermions \cite{longo2018relative}. In particular, this property implies $\lim_{x\rightarrow 1} U(x)=U(0)=0$.

The usefulness of Renyi entropies is that they are described by partition functions in a replicated manifold when the index $n$ is an integer. One has to evaluate the partition function on $n$ copies of the Euclidean plane sewn on the region $R$ in cyclic order. A particular simplification occurs for two intervals in a $2d$ CFT when the index is $n=2$. In this case, the replica manifold has genus one. It can then be conformally transformed to a torus. This is described in detail in \cite{furukawa2009mutual,Headrick:2010zt,Cardy:2017qhl}, based on earlier work on orbifolds models \cite{lunin2001correlation}. The conformal transformation of the partition function from the flat replicated space to the torus gives an anomalous contribution that needs to be accounted for. Its contribution follows by evaluating the transformation with the Liouville action. This contribution is universal and depends only on the central charge and the transformation properties of the stress tensor. We therefore do not expect any changes depending on whether the theory is complete or not. We will check this expectation more explicitly in section \ref{compu}, where we also discuss special issues that appear for the specific case of the Virasoro net (without any primaries) for $c>1$.  Using this conformal mapping, the general result for  $I_2(x)$  is 
\be
 I_2(x)=\log \, Z[i l]-\frac{c}{12}\, \log\left(\frac{2^8\,(1-x)}{x^2}\right)\label{mutualr2}\,,
 \ee
 or
\be
U_2(x)=\log \, Z[i l]+\frac{c}{6}\, \log\left(\frac{x\,(1-x)}{2^4}\right)\,.\label{unasola}
\ee
Here $Z[\tau]=\textrm{tr} q^{L_0-c/24} \bar{q}^{\bar{L}_0-c/24}$, $q=e^{i 2\pi \tau}$, is the partition function of the original theory, and 
$Z[i l]$ corresponds to a rectangular torus of modular parameter $\tau=i l$. This modular parameter is related to the original cross ratio $x$ by
\be
x=\left(\frac{\theta_2(i l)}{\theta_3(i l)}\right)^4\,,
\quad
1-x=\left(\frac{\theta_2(i/l)}{\theta_3(i/l)}\right)^4\,,\quad l=\frac{_2F_1(\frac{1}{2},\frac{1}{2},1,1-x)}{_2F_1(\frac{1}{2},\frac{1}{2},1,x)}\,.\label{tres}
\ee
 Note $x\leftrightarrow 1-x$ corresponds to $l\leftrightarrow 1/l$. 
The far away interval limit $x\rightarrow 0$ corresponds to the zero temperature limit $l\rightarrow \infty$, 
\be
x\sim 16\, e^{-\pi\, l}\,,
\ee
and the limit in which the two intervals approach each other $x\rightarrow 1$ corresponds to the high temperature one $l \rightarrow 0$
\be
x\sim 1-16\, e^{-\pi/l}\,.\label{29}
\ee

Using (\ref{unasola}) we arrive at the following formula for the ``crossing asymmetry''
\be
A_2(x)=U_2(x)-U_2(1-x)= \log Z[i \,l]- \log Z[i/l]\,. \label{asym}
\ee
We see that Haag duality for two intervals implies through (\ref{hd}) the $S$ modular invariance of the torus partition function $Z[\tau]=Z[-1/\tau]$.\footnote{The generalization of this relation for non rectangular torus can be obtained similarly from Haag duality for intervals boosted with respect to each other.} The converse is also true as will become clear in what follows. 

Summarizing, the failure of crossing symmetry (\ref{hd}) for the Renyi entropies is because Renyi entropies are equal for commutant algebras in a pure state, rather than for complementary regions. The algebras of complementary regions are different from the commutant algebras precisely when Haag duality is violated. The relation $S_n(R)=S_n(R')$, which is connected to modular invariance, is not implied by unitarity. Modular invariance is then a form of completeness of the theory, since Haag duality for generic regions implies we cannot non-trivially extend the given theory without violating causality.

\subsection{Renyi entropies and replica twists}

Another perspective on the connection between modular invariance and completeness comes from the analysis of Renyi entropies themselves and the associated replica twist operators. Renyi entropies are computed through partition functions on replicated manifolds glued along the regions of interest. These replica partition functions can be written in terms of expectation values of replica twists.  More concretely suppose we have a general theory ${\cal T}$. We can form a replicated model ${\cal T}^{\otimes n}$ by taking $n$ independent copies. The replicated model has a global unbroken symmetry given by any permutation group between the copies. In particular, we can take cyclic permutations giving a group $G=\mathbb{Z}_n$ corresponding to the usual Renyi replica symmetries. Choose a generating element $g\in G$. Associated to this symmetry there are unitary operators $\tau_R$ on ${\cal T}^{\otimes n}$ that implement the cyclic symmetry $G$ on $R$ and do nothing in $\bar{R}\subset R'$. $\bar{R}$ is essentially the complementary region $R'$ except for a small buffer zone of width $\epsilon$ at the boundary.\footnote{See Fig 1 in \cite{Benedetti:2022zbb} and associated text for a more detailed description.} This $\epsilon$ serves as a cutoff, equivalently a regularization that allows the twist to be well defined. These twists can be constructed in general by the standard construction \cite{buchholz1986noether}, such that they respect the group operation and $\langle \tau_R \rangle=\langle \tau_R^{-1}\rangle$.
The Renyi entropies are given by
\be
S_{n}(R)=\frac{1}{1-n} \, \log \textrm{tr} \rho_R^n = \frac{1}{1-n} \, \log \langle \tau_R \rangle.
\ee
The operator $\tau_{\bar{R}}=g\, \tau_R^{-1}$ is a replica twist for the complementary region $\bar{R}$. As the vacuum is invariant under $g$, we also have $\langle\tau_{\bar{R}}\rangle=\langle \tau_R\rangle$. This naively suggests the general equality of Renyi entropies for complementary regions. 

However, let us consider more carefully what is going on in the case we have HDV sectors for $R$ in the original theory $\mathcal{T}$. The replicated theory ${\cal T}^{\otimes n}$ will have non local sectors that are the $n$ times tensor product of the original ones. 
 The replica $\mathbb{Z}_n$ 
 symmetry interchanges non local operators of different copies, and then it does not keep the HDV sectors of ${\cal T}^{\otimes n}$ invariant. Equivalently, the HDV sectors of the replicated theory are ``charged'' under the replica symmetry. We are then in the scenario studied in \cite{Benedetti:2022zbb}. 

When $R$ has HDV sectors that transform under the symmetry we have different possible twists for the region $R$ in the replicated model \cite{Benedetti:2022zbb}. These twists differ in their macroscopic properties rather than in regularization. We have complete twists $\tau^c_R$, that implement the symmetry on $\hat{{\cal A}}(R)$, namely, they transform both the local and the non local operators in $R$. We also have additive twists $\tau^a_R$, that only implement the symmetry on the local operators ${\cal A}(R)$ and commute with $\hat{{\cal A}}(\bar{R})$. As their name suggests, the additive twists can be constructed additively in the appropriate region (region $R$ plus buffer zone $\epsilon$), while complete twists do contain HDV operators in themselves. It is important to note that both classes of twists can be constructed. Indeed, starting from additive/complete twists in $R$, the complete/additive twists for the complementary region $\bar{R}$ can be obtained as
\be
\tau^a_{\bar{R}}=g\, (\tau_R^c)^{-1}\,,\quad  \tau^c_{\bar{R}}=g \, (\tau_R^a)^{-1}\,.\label{212}
\ee
They correspond to the group element $g$ in the complementary region. Additive and complete are interchanged under this operation. As the vacuum is invariant under replica symmetry we have $\langle \tau_R^a\rangle=\langle\tau_{\bar{R}}^c\rangle$. This implies that the Renyi entropies of the additive algebras coincide with the Renyi entropies for the complementary region computed with the complete twist. This corresponds to the Renyi entropy for the maximal algebra $({\cal A}(R))'=\hat{{\cal A}}(R')$, but not to the Renyi entropy of ${\cal A}(R')$. This explains why Renyi entropies for complementary regions (and the additive algebra) do not coincide in these anomalous cases. 

In $2d$ CFT this observation may lead to some additional confusion. Take again a region $R$ formed by two disjoint intervals. Twist operators, considered in orbifold theory (the fix point algebra under the action of the symmetry group) are usually thought of as the product of new local field operators at the end-points of the intervals. The reason for this is that the original local charged fields that do not commute with the twist are absent in the orbifold. We can then choose to include the twist in the set of local fields. This process of projecting to the neutral part and adding the twisted sectors is called the orbifold construction in the literature of 2d CFT's. This process takes us from one local theory to another one with a different local field operator content. In this case, in the orbifold theory ${\cal T}^{\otimes n}/\mathbb{Z}_n$ we could write the additive twist $\tau_R^a$ in the following manner
\be
\tau_R^a\,\rightarrow\,\sigma_n(a_1)\sigma_n^\dagger(b_1)\sigma_n(a_2)\sigma_n^\dagger(b_2)\,.\label{deco1}
\ee 
These $\sigma_n(x)$ are thought as local field operators in the orbifold theory, and the additive twist has then a nice and simple expression as a product of four $\sigma_n(x)$ located at the end of the intervals. This is the standard construction of replica partition functions.

However, in the present situation in which the original theory $\mathcal{T}$ has HDV sectors, this rewriting in terms of local operators is not possible, even for the additive twist. The ultimate reason (both physical and mathematical) is that replica symmetry in the replicated theory must change the HDV classes. Then we have both types of twists $\tau_R^c$ and $\tau_R^a$, complete and additive respectively. After projecting to the neutral sector they belong to the neutral algebra, where also the global element $g$ is mapped to the identity. Then, by (\ref{212}) the expression (\ref{deco1}) would also represent the complete twist in the complement of the two intervals. But this is not possible since the complete twist cannot be written additively in the two intervals. The reason is that complete twits still contain non local operators in the orbifold model. This is required in order to move the HDV classes, see \cite{Benedetti:2022zbb} for examples and further explanations.

There is a related perspective on the problem that arises by thinking of the twists as products of local fields $\sigma_n$ located at the endpoints of the intervals. In fact, if expression (\ref{deco1}) is correct, one may obtain the equality of complementary Renyi entropies ($x\rightarrow 1-x$) by the Euclidean crossing invariance of this multipoint field correlator \cite{Headrick:2010zt,Cardy:2017qhl}. As we have seen, this is not correct for models that violate Haag duality for two intervals. 

Let us remark that this reasoning is valid more generally, not only for the case of orbifolding replica symmetry. For any theory with a global symmetry in $2d$, the usual orbifold construction, that follows by taking only the uncharged operators and promoting twists generating the symmetry to local fields, is obstructed when this global symmetry changes HDV classes of the theory. This can be thought as a mixed anomaly between the global symmetry and the generalized symmetry given by the existence of non local classes. However, non local classes will not exist if the initial model is modular invariant, and this is why this obstruction has not been noticed (to our knowledge) in the literature in these terms.

Other heuristics ways of seeing the incompatibility of orbifolding a symmetry that changes HDV classes go by using the original/standard definition of orbifolds $\mathcal{T}/G$ in terms of partition functions \cite{Dixon:1985jw,Dixon:1986jc}, or as gaugings of generalized symmetries \cite{Frohlich:2009gb,Gaiotto:2014kfa,Bhardwaj:2017xup}. In the first approach one projects into invariant states under the symmetry action. But also notices that one can quantize the theory in different sectors, by imposing different boundary conditions on the fields. These new boundary conditions just reflect the orbifold nature of the theory, namely, we should equate field configurations that differ by the action of a certain group element. The proposal in \cite{Dixon:1985jw,Dixon:1986jc} is that this construction produces a modular invariant theory. But this is indeed so when the input theory $\mathcal{T}$ is modular invariant itself. This is implicitly assumed as far as we can tell. But in the light of the present article, it is natural to ask for the result of ``orbifolding'' a not modular invariant theory. As we saw this theory contains HDV sectors. We explored this orbifolding in simple examples concerning subfactors of minimal models and where the symmetry interchanges different HDV sectors, and got unphysical results, as expected. According to our discussion, if a suitable notion of orbifold can be established for non-modular invariant models, it should be for symmetries that do not change sectors. Notice that, as the non local operators (charge-anticharge pairs) are ultimately operators in the local algebra of an interval, the action of the symmetry on the non local sectors is completely determined by the action of the symmetry on the local operators. This seems to suggest that the origin of the problem can be rephrased in terms of 't Hooft anomalies or the mixing of the symmetry with HDV sectors. This parallels the analysis of the ABJ anomaly \cite{Benedetti:2023owa}. It would be interesting to explore this perspective further.

\section{Partition functions and the index}
\label{secIII}

The Renyi crossing asymmetry $A_2 (x)$, eq. (\ref{asym}), measures both the failure of Haag duality for two intervals and the failure of modular invariance in the torus partition function. In general, this asymmetry is a complicated function of the modular parameter. However, we now see that the limit of two touching intervals $x\rightarrow 1$, or equivalently the high temperature limit $l\rightarrow 0$, has a particularly simple expression in terms of the Jones index \cite{J,KOSAKI1986123,L11,L12} associated with the category of representations of the 2d CFT. This occurs when this limit exists or equivalently when the index is finite. This index has a simple expression in terms of the quantum dimensions of the model. It measures the size of the HDV in a universal algebraic manner. To introduce this idea we first consider a simple case where there are two models (in any dimensions) related by a symmetry group.\footnote{For the entropic order parameters \cite{Casini:2020rgj} the relation between the limit of two touching intervals, or more generally the limit in which the size of the regularizing region goes to zero, and the Jones index was established by different means and in different scenarios in \cite{Casini:2020rgj,Casini:2019kex,longo2018relative,Xu:2018uxc,hollands2020variational}.}    

\subsection{The case of a symmetry group}

Suppose we have a theory ${\cal C}$ and a finite global internal symmetry group $G$ acting on ${\cal C}$. The neutral operators invariant under $G$ form another theory ${\cal T}\subset {\cal C}$.\footnote{The logic of the notation, that will extend along the article, is that ${\cal C}$ corresponds to a ``complete'' extension of the theory ${\cal T}$ we are interested in.} The relation between ${\cal C}$ and ${\cal T}$ can be formalized through a conditional expectation $\varepsilon:{\cal C}\rightarrow {\cal T}$ as
\be
\varepsilon (x)=\frac{1}{|G|}\sum_{g\in G} U(g)\, x\, U(g)^\dagger\,, \label{ce}
\ee
where $U(g)$ is the group representation of $G$ on the Hilbert space ${\cal H}_{\cal C}$ of ${\cal C}$. The latter contains charged operators in different sectors, corresponding to the irreducible representations $r$ of $G$. The sector corresponding to the identity coincides with ${\cal T}$. 

Consider the thermal partition functions $Z_{\cal C}(\beta)$ and $Z_{\cal T}(\beta)$, of theories ${\cal C}$ and $\mathcal{T}$ in some compact manifold $M$. These correspond to the path integral over $M\times S^1$. We are interested in the ratio $Z_{\cal T}(\beta)/Z_{\cal C}(\beta)$. The partition function $Z_{\cal T}(\beta)$ only sums
over ${\cal H}_{\cal T}$, the Hilbert space of neutral states, and we are assuming the Hamiltonian $H$ is a neutral operator. This can be obtained from the full Hilbert space ${\cal H}_{\cal C}$ by projecting over the neutral states. To this end we can use the following projection operator
\be
P=\frac{1}{|G|} \sum_{g\in G} U(g)\,. \label{proj}
\ee
 Then we have
\be\label{quo}
\frac{Z_{\cal T}(\beta)}{Z_{\cal C}(\beta)}=\langle P\rangle_\beta= \frac{1}{|G|}\sum_{g\in G} \langle U(g)\rangle_\beta\,,\quad \langle \cdots\rangle_\beta\equiv \frac{\textrm{tr}_{\cal H_{\cal C}} e^{-\beta H} (\cdots)}{Z_{\cal C}(\beta)}\,.
\ee
In the path integral formulation $Z_{\cal T}(\beta)$ corresponds to the original path integral for ${\cal C}$ with insertions of the group elements at time $t=0$, and averaged over the group.  This ratio will in general be a complicated function of the temperature and the size of $M$.

A simplification occurs in the high temperature limit. In this limit, there are large fluctuations of charges that make 
\be
\lim_{\beta\rightarrow 0} \langle U(g)\rangle_\beta\rightarrow \delta_{g,1}\,.
\label{fc} \ee
This was proved in \cite{magan2021proof} using algebraic methods. In physical terms, at high temperature, we have a large number of independent charged fluctuations of different irreducible representations $r$ of the group at different points in $M$. Then the typical global representation will be the tensor product of a large number of different random representations. 
The product of representations can be thought as a stochastic process that converges to multiples of the regular representation (see for example \cite{Casini:2019kex}). The character of the regular representation is proportional to $\delta_{g,1}$, and this gives zero expectation values for group elements different from the identity in these representations.
This gives
\be
\lim_{\beta\rightarrow 0}\frac{Z_{\cal T}(\beta)}{Z_{\cal C}(\beta)}=\lim_{\beta\rightarrow 0} \langle P\rangle_\beta= \frac{1}{|G|}\,.\label{lab1}
\ee
Hence, this ratio of partition functions in the high temperature limit is purely kinematical. It allows us to evaluate the size of the group.

We can generalize $Z_{\cal T}(\beta)$ by evaluating the partition function $Z^{(r)}(\beta)$ restricted to global states in any given irreducible representation $r$ (so $Z_{\cal T}(\beta)=Z^{(1)}(\beta)$). In this case, we have to insert the projector on the representation $r$
\be
P_r=\frac{d_r}{|G|} \sum_{g\in G} \chi_r^*(g) \, U(g) \label{proj1}
\ee
in the partition function. Using (\ref{fc}) we get 
\be
\lim_{\beta\rightarrow 0}\frac{Z^{(r)}(\beta)}{Z_{\cal C}(\beta)}=\lim_{\beta\rightarrow 0} \langle P_r\rangle_\beta= \frac{d_r^2}{|G|}\,.\label{lab}
\ee
Hence, the  probability of the charge $r$ at high temperature is given by  
\be
p_r=\frac{d_r^2}{|G|}\,. \label{fff}
\ee
This probability distribution was determined in \cite{Casini:2019kex}, both as the probability (or ratio) of different irreducible representations in the vacuum density matrix or in the thermal density matrix at infinite temperature. Ref. \cite{Pal:2020wwd} independently arrived at it in the context of $d=2$ CFT's, where it appears as controlling the contribution to the density of states of a specific irreducible representation. In this vein it was later conjectured in \cite{harlow2022universal}. Ref. \cite{magan2021proof} established the equivalence between all these perspectives and presented a direct general proof. See also \cite{Cao:2021euf} for a derivation in weakly coupled theories.

Notice this $p_r$ is exactly the fraction of basis states in the representation $r$ within the regular representation, as expected from previous arguments. These results translate quite directly to other situations. For example, we can obtain the same results by taking the limit of large volume at any fixed non zero temperature, even for a massive theory. The reason is again the existence of a large number of independently charged fluctuations with non zero probabilities. 

Another scenario is to consider vacuum expectation values of twists operators. A twist is a unitary operator $\tau(g)$ that implements the symmetry operation in some region of the space $R$ and does nothing in a spatially separated region $\bar{R}$.\footnote{Strictly speaking, in the causal region determined by $R$.}  For concreteness we take the region $R$ to be a ball and $\bar{R}$ the complement of a ball separated by a distance $\epsilon$ form $R$ on the  surface $t=0$. Twists can be constructed with standard methods such that they are representations of the group $G$, and transform covariantly under the global operations $U(h) \,\tau(g)\, U(h)^\dagger=\tau(h\,g\,h^{-1})$ \cite{Doplicher:1984zz,buchholz1986noether}. If we take a decreasingly small separating size  $\epsilon$, the twists become very sharp. They will necessarily display large fluctuations except for the identity. In the limit one then obtains 
\be
\lim_{\epsilon\rightarrow 0} \langle \tau(g)\rangle\rightarrow \delta_{g,1}\,. \label{ty}
\ee
 This can also be understood as a consequence of the existence of a large number of almost independent, charge-anticharge pair fluctuations, with non trivial probability in the vacuum, where charge and anticharge operators are localized in $R$ and $\bar{R}$, separated by a distance of order $\mathcal{O}(\epsilon)$ across the boundary of $R$.      
 A formula analog to (\ref{lab}) (and to  (\ref{lab1}) for $r=1$) in this case involves the Euclidean plane partitions function $Z^{(r)}(R,\epsilon)$ with insertions of localized charge projectors $P_r(R,\epsilon)$ in $R$ constructed in terms of the local twists with the formula analog to (\ref{proj1}). The limit of large temperatures is replaced by the limit $\epsilon\rightarrow 0$.  More explicitly
\be
\lim_{\epsilon \rightarrow 0} \frac{Z^{(r)}(R,\epsilon)}{Z_0}=\lim_{\epsilon \rightarrow 0} \langle P_r(R,\epsilon) \rangle=\frac{d_r^2}{|G|}\,, \label{lali}
\ee 
where $Z_0$ is the plane partition function without insertions, and the expectation value is in vacuum. This is the way these probabilities appear in the entanglement structure of the vacuum, first obtained in \cite{Casini:2019kex}, see also \cite{Kusuki:2023bsp} for a different approach to the same formula in 2d CFT. We note these formulas make the connection with symmetry resolved entropies \cite{Xavier:2018kqb,Murciano:2020vgh} transparent, as explained in \cite{magan2021proof}.

\subsection{More general charged sectors}
\label{drd}

The case treated previously corresponds to ${\cal T}\subset {\cal C}$, where ${\cal T}$ is the fix-point subalgebra of ${\cal C}$ under the action of an internal symmetry group $G$. According to DHR theorem \cite{Doplicher:1971wk,Doplicher:1973at,Doplicher:1990pn,Longo:1994xe}, this is the general case of a subtheory in dimensions $d>2$. In $d=$ we can have more general scenarios. More precisely, the relation between an algebra of charged sectors of a theory ${\cal C}$ and a subtheory ${\cal T}$ is not necessarily associated with the average of a group. Incomplete theories in $d=2$ give rise in general to braided categories, as opposed to symmetric categories.  See \cite{haag2012local} for a discussion and references. 

However, much of the relevant structure is common to both cases. In particular, we have again irreducible charged sectors $r$, $r\in {\cal R}$, where $r=1$ is the identity, and we are assuming for simplicity to form a finite set. Operators in each of the charged sectors are transformed within themselves under the operators in $\mathcal{T}$. Hence, a general element of ${\cal C}$ can be written
\be
c=\sum_{r,i}  t_{r,i}\, \psi_r^i\,, \label{ssa}
\ee
where $t_{r,i}\in {\cal T}$ and $\psi_{r}^i$ can be chosen as fixed elements of ${\cal C}$. The index $i\in 1,\cdots,N_r$, marks a possible multiplicity of the sectors. Accordingly, the Hilbert space decomposes into sectors ${\cal H}=\oplus\, N_r\,  {\cal H}_{r}$, and we have projectors $P_r$ associated to each sector. 

Because ${\cal C}$ is a closed algebra we have and OPE-like decomposition \cite{Longo:1994xe} 
\be
\psi_{r_1}^{i_1}\, \psi_{r_2}^{i_2}= \sum t_{r_1,i_1,r_2,i_2,r_3,i_3}\, \psi_{r_3}^{i_3}\;,
\ee
for some elements $t_{r_1,i_1,r_2,i_2,r_3,i_3}\in {\cal T}$. This suggest the idea of a fusion of the different sectors. To express this idea with more precision, and to introduce the dimensions of the sectors, it is convenient to choose  charged operators in (\ref{ssa}) that are isometries
\be
\psi_r^{i\,\dagger} \,\psi_r^j=\delta_{ij}\,.
\ee
Then, changing the ordering of the neutral and charged operator in an expansion like (\ref{ssa}) gives
\be
 \psi_r^i\, t= \rho_r(t)\, \psi_r^i\,,
\ee
where $t\rightarrow \rho(t)$ is an endomorphism of the algebra ${\cal T}$. This way, charged sectors are represented as special endomorphisms of the neutral algebra that are induced by the inclusion ${\cal T}\subset {\cal C}$.
The importance of the endomorphisms is that they can be composed (fusion) and can be decomposed into direct sums of irreducibles. This is analogous to the theory of representations of a group. It is to be noted however, that the fusion of endomorphism $\rho_r$, $r\in {\cal R}$, in general will generate a bigger set of irreducible endomorphisms $\tilde{{\cal R}}$ of $\cal T$. In this larger class the fusion can be written in a standard form 
\be
 \rho_{r_1}\,\circ \,\rho_{r_2}= \bigoplus_{r_3\in \tilde{{\cal R}}} N^{r_3}_{r_1 r_2} \,\, \rho_{r_3}\,,
\ee 
where $N^{r_3}_{r_1 r_2}\in \mathbb{N}$, and  $N_{1 r_1}^{r_2}=\delta_{r_1,r_2}$. For each sector $r$ there is a unique conjugate sector $\bar{r}$ which is the only one that gives a fusion containing the identity, namely $N_{r s}^1=\delta_{r \bar{s}}$. The fusion is associative and commutative, and then the matrices $N^{(r)}$ defined by
\be
N^{(r)}_{st}=N_{rt}^s
\ee 
commute with each other. They are then simultaneously diagonalizable. Since they have positive entries, by the Perron-Frobeniuos theorem there is a unique common eigenvector with positive entries and positive eigenvalues. These eigenvalues are also the maximal ones for each matrix. The maximal eigenvalue of $N^{(r)}$ is called the dimension $d_r$ of the sector (or the quantum dimension alike).  It also follows that the common eigenvector with positive entries for the matrices $N^{(r)}$ is proportional to $(d_1,d_2,\cdots,d_n)$. We have $d_r\ge 1$, and $d_1=1$. A non irreducible representation with a stable fractional irreducible content under fusion with any other one, analogous to the regular representation for a group, satisfies 
\be
\rho_r\times \sum_s q_s\,\rho_s= \sum_{st} N^{(r)}_{st} q_s \, \rho_t\propto \sum_t  q_t\, \rho_t\,,\quad  r=1,\cdots,n \,. 
\ee
Then, it must be the case that $q_r\sim d_r$, is proportional to the eigenvector of the fusion matrices with positive entries. In fact, the fusion of many sectors always converges to this distribution $q_r\sim d_r$. This gives the fraction of representations in the stationary distribution.

In $d=2$ the dimensions are in general not integer numbers. A physical understanding of these non integer dimensions is that they determine the asymptotic dimension of the Hilbert space associated with $m$ excitations of a given sector $r$, when $m$ goes to infinity, see \cite{PreskillNotes}. Indeed, due again to the Perron-Frobeniuos theorem, the dimension of the intertwiner space associated with $m$ sectors of type $r$ scales proportionally to $d_r^m$ for large $m$. 
Another interpretation of the quantum dimensions that does not involve the fusion matrices was developed by R. Longo \cite{L11,L12}. There the dimensions arise from the Jones index associated with the inclusion of algebras determined by the endomorphism of $\mathcal{T}$ induced by the given sector $r$. 

Now, consider again the theory in a thermal state at high temperature in a compact space $M$ (a circle in $2d$). The full Hilbert space ${\cal H}_{\mathcal{C}}$ is decomposed as a direct sum of spaces of different charges ${\cal H}_{r}$, and we have the corresponding projectors to these spaces $P_r$. In particular, we again have ${\cal H}_{1}={\cal H}_{\mathcal{T}}$, the Hilbert space of ``neutral'' states. In analogy with the case of a group symmetry described above, the high temperature limit saturates the probabilities of the different sectors to be proportional to their dimensions and multiplicity
\be\label{prov}
p_r=\lim_{\beta\rightarrow 0} \langle P_r\rangle_\beta= \frac{N_r\, d_r}{\sum_{r\in {\cal R}} N_r\,d_r}\,.
\ee
We can heuristically think this limit as the one stable under the addition of new charged excitations, as for the case of groups. 
The sum 
\be
\lambda=[\mathcal{C}:\mathcal{T}]=\sum_{r\in {\cal R}} N_r\, d_r\:,\label{qqq}
\ee
is called the Jones index of the inclusion $\mathcal{T}\subset \mathcal{C}$. It can be defined in purely algebraic terms \cite{J,KOSAKI1986123,L11,L12}. As mentioned before, this is also called the total quantum dimension in topological models in $d=3$ \cite{PreskillNotes}. For $\mathcal{T}$ the neutral part of $\mathcal{C}$ under the action of a finite group $G$ we have $N_r=d_r$ and then $\lambda=\sum_r d_r^2=|G|$, so that the index coincides with the order of the group. 

To prove the result (\ref{prov}) for $\beta\rightarrow 0$ in a rigorous way more technical tools are needed. This proof, using a thermofield double description, and following the analogous one for group symmetries in \cite{magan2021proof}, is presented in Appendix \ref{app}. For CFT's a different proof can be found in \cite{Lin:2022dhv}. The present algebraic approach only uses relativistic symmetry and it is valid in generic $d=2$ QFT's.

The dual algebra to the charged sectors, analogous to the group operations and the projectors into the different representations in the orbifold case, are usually called topological defect lines or Verlinde lines in the physical literature. For diagonal modular invariants, they were defined in the Verlinde's original contribution \cite{VERLINDE1988360}, but there are more general topological defect lines, see e.g. \cite{Frohlich:2006ch}. At any rate, in the present algebraic formulation, von Neumann's double commutant theorem implies that this dual category of non local operators comes in parallel with that of charged sectors.  Moreover, this dual algebra comes with precisely the same size, the global index described before \cite{Longo:1994xe,Review}.\footnote{For the QFT vacuum and compact regions, this algebraic structure can be nicely described into quantum complementarity diagrams that encode the HDV \cite{Casini:2020rgj,Review,Magan:2020ake}. These complementarity diagrams are part of the Jones ladder \cite{J} associated to the inclusion of algebras ${\cal A}(R)\subseteq ({\cal A}(R'))'$.} Indeed, in the general finite index case we also have a conditional expectation $\varepsilon:\mathcal{C}\rightarrow \mathcal{T}$ that eliminates the charged operators, see \cite{Longo:1994xe} for the algebraic analysis of these more general nets of subfactors. And there is also a dual conditional expectation $\varepsilon'$ that maps the topological defect projectors into the identity\footnote{The projection $e:{\cal H}_{\mathcal{C}}\rightarrow {\cal H}_{\mathcal{T}}$ is called the Jones projection in the mathematical literature, and we have $\epsilon'(e)=\lambda^{-1} \mathbb{1}$. In this paper, we have called $P$ to this projection.}
\be\label{prg}
\varepsilon'(P_r)=\frac{N_r\,d_r}{\lambda} \,\mathbb{1}\;.
\ee
Hence the thermal state $\omega_\beta$ approaches at high temperatures a state invariant under the dual conditional expectation $\omega_\beta\rightarrow \omega_\beta\circ \epsilon'$. 

\subsection{The limit of the asymmetry}
Using these results and assuming a finite index, we now compute a particular limit of the Renyi crossing asymmetry for a generic theory ${\cal T}$
\be
\lim_{x\rightarrow 1} U_2(x)-U_2(1-x)=U_2(1)=\lim_{l\rightarrow 0} \left( \log Z_{\cal T}[i \,l]-\log Z_{\cal T}[i/l]\right)\,. \label{asyma}
\ee
This is zero for a modular invariant or complete model. But now assume the model $\mathcal{T}$ is a submodel of a complete one $\mathcal{C}$, so that $\mathcal{T}\subset \mathcal{C}$ is a strict inclusion. In the complete model the partition function can be pictured as a featureless path integral without insertions, and satisfying modular invariance directly from the geometric symmetry of the calculation. Then we have
\be
U_2(1)=\lim_{l\rightarrow 0} \left(\log \frac{Z_{\cal T}[i \,l]}{Z_{\mathcal{C}}[i \, l]}-\log \frac{Z_{\cal T}[i/l]}{Z_{\mathcal{C}}[i/l]}\right)\,.
\ee
Now, the second term in this expression corresponds to the zero temperature limit. In this limit, the partition function is dominated by the vacuum or Casimir energy, and subleading terms give exponentially small corrections. Since the vacuum and Hamiltonian are the same for the two models this term vanishes in the limit. We get
\be\label{u2m}
U_2(1)=\lim_{l\rightarrow 0} \log \frac{Z_{\cal T}[i \,l]}{Z_{\mathcal{C}}[i \, l]}= -\log \lambda\,.
\ee
This follows from \ref{quo} and \ref{prg}. Since for a complete model we have $U(1)=0$, this result clearly expresses the failure of completeness of the model in terms of the index, the partition function and the Renyi entropy.  Notice we could get individual sector probabilities (\ref{prov}), replacing $\lambda^{-1}$ in this formula, just by considering the projected partition functions into different representations $r$. Following \cite{magan2021proof} (see also appendix \ref{app}), this way one derives symmetry resolved entropies \cite{Xavier:2018kqb,Murciano:2020vgh} and the universal charged density of states for these more generic (typically non-invertible) scenarios. Indeed all these quantities follow from eq. (\ref{prg}) as well. In particular, the density matrix has the block decomposition $\oplus_r p_r\rho_r$, with $p_r=N_r\,d_r/\lambda$, first derived in \cite{Casini:2019kex} for groups, here seen to be valid in generic categorical scenarios. Symmetry resolved entropies for invertible symmetries were also derived using boundary conformal field theory methods in \cite{Kusuki:2023bsp,Saura-Bastida:2024yye}, obtaining the same result.\footnote{In \cite{Kusuki:2023bsp,Saura-Bastida:2024yye} an obstruction to define symmetry resolved entropies was found for theories in which the symmetry cannot be orbifolded. This obstruction is a problem with the computational technique, that requires certain classes of boundary conditions. It is related to the problems mentioned in section \ref{secII}. While the expression of the twists in terms of local fields might not exist (obstructing the existence of the orbifold), the twists and projectors still exist as non local operators in the original model. Therefore formula \ref{u2m}, and associated symmetry resolved entropies, are still well defined and correct even when considered for the computation of entanglement entropy in such anomalous scenarios.} Considerations for the more general categorical scenarios in $d=2$ CFT's appear shortly after publication of the present paper in \cite{Choi:2024wfm,Das:2024qdx,Heymann:2024vvf}, again using boundary conformal field theory.

Still, expression (\ref{u2m}) is not completely satisfactory because it involves the index of the incomplete model $\mathcal{T}$ with respect to a given completion $\mathcal{C}$. Although the index itself does not depend on the particular completion, it would be more appropriate to find an intrinsic formula that only cares about theory $\mathcal{T}$. This can be achieved by using an algebraic index that pertains to the incomplete model itself. A particularly good candidate is the two interval index, or global index $\mu$. This is the Jones index associated with the inclusion of the algebras 
\be
{\cal A}(R)\subset\hat{{\cal A}}(R)\,,
\ee
where $R=R_1\cup R_2$ is the two interval region. 
In \cite{kawahigashi2001multi}, proposition 24, it was shown that for two models $\mathcal{T}\subset \mathcal{C}$ the global indices of the two models satisfy the following relation
\be\label{twointglob}
\mu_{\mathcal{T}}= \lambda^2\,\mu_{\mathcal{C}}\,,
\ee
where we remind $\lambda=[{\cal C}:{\cal T}]$, namely the index of the inclusion $\mathcal{T}\subset \mathcal{C}$. Then, if $\mathcal{C}$ is complete, $\mu_{\mathcal{C}}=1$, it follows that
 $\mu_{\mathcal{T}}=\lambda^2$. We then obtain the desired expression
 \be
U_2(1)= -\frac{1}{2}\log \mu\,.\label{tt}
\ee
This is an intrinsic expression measuring the failure of completeness/modular invariance for a model $\mathcal{T}$.

The index appears as a subleading term in the free energy $\log Z_{\cal T}$ without involving the partition function of the complete model $Z_{\mathcal{C}}$. To see this consider first the case of the complete model where $\log Z_{\mathcal{C}}[i l]=\log Z_{\mathcal{C}}[i/l]$. The low temperature expansion is dominated by the Casimir energy, and has exponentially small corrections because of the discreteness of the spectrum,
\be
\log Z_\mathcal{C}[i/l]\simeq \frac{\pi\,c l}{6} + \textrm{exponentially small}\,,\quad l\gg 1\;.\label{aba}
\ee
Then it does not have a constant term. In consequence, the same happens for the large temperature expansion 
\be
\log Z_\mathcal{C}[i\, l]\simeq \frac{\pi\,c }{6 l} + \textrm{exponentially small}\label{aba1}\,,\quad l\ll 1\;.
\ee
  Then, from (\ref{u2m}) we get
\be
\log Z_\mathcal{T}[i \,l]=  \frac{\pi\,c }{6 l} -\frac{1}{2}\log \mu+\cdots\,, \quad l\ll 1\,. 
\ee
This formula was obtained (for chiral theories) in \cite{kawahigashi2005noncommutative}. Note the first term gives the Cardy formula \cite{Cardy:1986ie} for the thermal entropy, that  does not depend on exact modular invariance, at least for finite index (see section \ref{infinite} below for the discussion of an infinite index).

In terms of the mutual information (Renyi index $n=1$) we have for the corresponding $U_1(x)$ the same limit in terms of the index \cite{Casini:2019kex,Casini:2020rgj,longo2018relative,hollands2020variational}
\be
U_1(1)=-\frac{1}{2}\log \mu\,.
\ee
The same limit holds for all Renyi entropies, as we discuss in section \ref{compu} below.

It is interesting to note that the mutual information crossing asymmetry, $A_1(x)=U_1(x)-U_1(1-x)$, is monotonically decreasing, at least for models that are submodels of a complete one. If there is a conditional expectation from $\epsilon:\mathcal{C}\rightarrow \mathcal{T}$, then the conditional expectation property of mutual information implies that $I_1^{\mathcal{C}}(x)-I_1^{\mathcal{T}}(x)$ can be written as a relative entropy \cite{ohya2004quantum,petz2007quantum,Casini:2019kex,longo2018relative}. It is precisely a special case of an entropic order parameter. The monotonicity of this relative entropy gives the monotonicity of the mutual information difference between the two models  
\be
(I_1^{\mathcal{C}}(x)-I_1^{\mathcal{T}}(x))'=(U^{\mathcal{C}}_1(x)-U^{\mathcal{T}}_1(x))'\ge 0\,.
\ee
 Then, as $\mathcal{C}$ is complete, $U_1^{\mathcal{C}}(x)-U_1^{\mathcal{C}}(1-x)=0$, leading to
\be
(A^{\mathcal{T}}_1(x))'=(U_1^{\mathcal{T}}(x)-U_1^{\mathcal{T}}(1-x))'\le 0\,. 
\ee
This implies the extremal values of the asymmetry, reached at $x=0,1$, are proportional to the index. Several examples suggest the same property could be valid for the asymmetry $A_2(x)$, and hence also for the asymmetry in the partition function. However, we could not prove this is generally valid.

\subsection{DHR sectors, global index, and statistical modular matrices} 

For a given theory $\mathcal{T}$ there may be localizable superselection sectors. That is, states that differ from the vacuum in a bounded region and that cannot be produced out of the vacuum with the local operators.  These are called DHR sectors \cite{Doplicher:1971wk,Doplicher:1973at}. This is a characteristic of the theory alone. These sectors can be analyzed already in the vacuum sector of the theory. In this particular sector, they are seen as inducing endomorphisms of the local algebras. The main observation is that these endomorphisms form a category, with direct sums, subobjects and conjugates. One of the most important observations is that in $d>2$, this category has a symmetric or permutation exchange symmetry. This key result serves as the starting point for the DHR reconstruction theorem \cite{Doplicher:1990pn,Longo:1994xe}, stating that these sectors are given by the representations of a symmetry group. On the other hand, for $d=2$ the category of DHR endomorphisms does not admit generically a permutation symmetry. Instead, it is a braided category \cite{FRS,cmp/1104179464}, with a representation of the braiding group.

As introduced in the previous section, in both scenarios there is a natural dimension $d_r$ for any given irreducible sector $r$. This dimension was originally defined in \cite{Doplicher:1971wk,Doplicher:1973at}, see also \cite{Longo:1996hkk}. The relation between the algebraic approach in terms of an inclusion of algebras with associated Jones index, as used above, and the DHR endomorphisms was laid down in \cite{L11}. In particular the index of the inclusion $\rho_r (\mathcal{T}) \subset \mathcal{T}$, where $\rho_r$ is an endomorphism with label $r$ is equal to $d_r^2$. For the specific case of the two interval inclusion, Ref. \cite{Longo:1994xe,kawahigashi2001multi} showed the global index to be equal to the dimension of the canonical endomorphism, namely
\be
\mu=\sum_{r\in DHR} d_r^2\,,
\ee
where $r$ runs over all possible DHR sectors. This is natural because non local operators in two intervals are charge anticharge operators in each interval, and the sum must run over all possible sectors.

Notice that in $d=2$ this leads to surprises. If we have a complete model, then the two interval inclusion is trivial and therefore $\mu=1$. Equivalently there is no non trivial DHR sector. But if we have a model $\mathcal{T}$ with index $\lambda$ with respect to a complete model $\mathcal{C}$, then $\mu_{\mathcal{T}}=\lambda^2$, as mentioned above. We can think heuristically that there are more DHR sectors than the ones the complete model $\mathcal{C}$ has with respect to $\mathcal{T}$. The reason is that DHR sectors need not be local to each other in $d=2$. For example, if the relation to a complete model is a group $G$, we have $\lambda=|G|$, and $\mu_{\mathcal{T}}=\sum_{r\in DHR} d_r^2=|G|^2$. The reason is that both charges and twists of the symmetry group give place to $DHR$ sectors. We see that DHR sectors and/or the two interval inclusion contains information about all possible local completions. On the other hand, in more dimensions the index $\mu_{\mathcal{T}}$ of a two double cone region is $|G|$ \cite{L11,Casini:2019kex} since there is no possibility of twist sectors that can be localized in the two balls. This is related with the fact that completions of two-ball HDV are unique in higher dimensions.\footnote{Still, in higher dimensions we can again see differences for other types of regions. For example, pure gauge theories in $d=4$ have an index of the inclusion of algebras associated with a ring which is the square of any possible Haag-Dirac net we might define \cite{Casini:2020rgj}. Here Haag-Dirac nets play the role of a local completion in $d=2$. Still, these ``completions'' of higher form symmetries in higher dimensions are not full-fledged completions since they are achieved at the expense of additivity of the net of algebras.}  

After this digression let's go back to $d=2$. In these QFT's, the braided category of DHR sectors has further important properties \cite{R4}. In particular, there is always associated a unitary representation of the group $SL(2,\mathbb{Z})$, with ``statistical'' matrices $\tilde{S}$, $\tilde{T}$, such that\footnote{We use tildes for the statistical matrices to avoid confusion with the modular transformations matrices associated to chiral characters. These later are the statistical matrices of the special case of the  DHR sectors associated to the chiral subtheory.}
 \be
\tilde{S} \tilde{S}^\dagger=\tilde{T} \tilde{T}^\dagger =1\,,\quad \tilde{T}\tilde{S}\tilde{T}\tilde{S}\tilde{T}=\tilde{S}\,,\quad \tilde{S}^2=C\,,\quad \tilde{T}C=C\tilde{T}=\tilde{T}\,.
\ee
The matrix $C=\delta_{r\bar{s}}$ is the conjugation matrix. Further, there are Verlinde-like formulas \cite{VERLINDE1988360} for the fusion matrices
\be
N^{r}_{st}=\sum_u \frac{\tilde{S}_{su}\tilde{S}_{tu}\tilde{S}^*_{ru}}{\tilde{S}_{0u}}\,,
\ee
and
\be
\tilde{S}_{0r}=\tilde{S}_{r0}=\frac{d_r}{\sqrt{\sum d_r^2}}\,,\quad \tilde{S}_{rs}=\tilde{S}_{sr}=\tilde{S}_{r\bar{s}}^*=\tilde{S}_{\bar{r}\bar{s}}\,, \quad d_r=\frac{\tilde{S}_{r0}}{\tilde{S}_{00}}\,.
\ee
The global index is 
\be
\mu=\sum_r d_r^2=\tilde{S}_{00}^{-2}\,.\label{tiro}
\ee
For groups, there are analog formulas in terms of the group characters, but here $\tilde{S}$ is symmetric in contrast to the case of non-abelian groups. 

\subsection{Modular invariance and index in terms of chiral characters}

Let $\mathcal{A}$ be a local chiral algebra for the light cone. For each superselection sector $r$ of this algebra we have a character defined as 
\be
\chi_r(\tau)=\textrm{Tr}_r \,e^{2 \pi i \tau (L_0-c/24)}\,.
\ee
Assume these transform under modular transformations linearly as
\be 
\chi_r (-1/\tau)=\sum_s S_{rs} \, \chi_s (\tau)\,,\quad \chi_r(\tau+1)=\sum_s T_{rs} \, \chi_s(\tau)\,.
\ee
In all known cases these $S,T$ are the same matrices as the statistical matrices for the DHR sectors of the chiral theory. A chiral net where this holds was called a modular net in \cite{kawahigashi2005noncommutative}, where it was shown that this property is preserved by local extensions of the chiral net. We note the standard relation between characters and dimensions of the modular category
\be
\lim_{\tau\rightarrow 0} \frac{\chi_r(\tau)}{\chi_1(\tau)}=\frac{S_{0r}}{S_{00}}=d_r\,, \quad \lim_{\tau\rightarrow 0} \frac{\chi_1(\tau)}{\chi_1(-1/\tau)}=S_{00}\,. \label{tit}
\ee
See e.g. \cite{kawahigashi2005noncommutative} for a derivation from the present perspective. 

Now, if we take $\mathcal{A}\times\mathcal{A}$ as the full chiral algebra for a CFT, the CFT corresponds to a local extension $\mathcal{A}\times \mathcal{A}\subset \mathcal{T}$ of the chiral algebra. The partition function of the CFT is  
\be\label{partM}
Z=\sum M_{rs}\, \chi_r(\tau)\, \chi_s(\bar{\tau})\,,
\ee
with the coupling matrix $M_{rs}\in\mathbb{N}$, and $M_{00}=1$, by uniqueness of vacuum. 

Modular invariance of $Z$ then corresponds the two matrix relations 
\be 
S\, M=M\, S\,,\quad T \,M=M\, T\,. 
\ee
The meaning of the $T$ symmetry is well known, and for unitary local CFT's this symmetry is obligatory by the locality of the theory \cite{francesco2012conformal,R4}. Let us clarify this shortly here. Given a primary field $\phi=\phi_i\phi_{\bar{j}}$ with scaling dimensions $(h_i,\bar{h}_j)$ the Euclidean correlation function is
\be
\langle \phi(z)\phi(0)\rangle = \frac{1}{z^{2 h}\, \bar{z}^{2 \bar{h}} }\,.
\ee
Euclidean correlations of local fields have to be real analytic everywhere except the coincidence points, and (for bosonic fields) must be permutation invariant. This is only possible if 
\be
h=\bar{h}+k\,,\quad k \in \mathbb{Z}.\label{condition1}
\ee
This is a condition on the self-locality of the field. It coincides with the requirement of $T$ modular invariance for the parity symmetric case $c=\bar{c}$ (for simplicity we are restricting to this case). One can check that self locality of the fields is enough for locality in three and higher point functions provided the primaries close an algebra of self local fields. As an example, we might worry that $T$ duality might not invalidate a scenario in which we have a twist field and a local field charged under it. These two fields are self local themselves but fail to be local relatively to each other. Here is where closing of the algebra with self local fields enters the game. In fact, the product of a twist and a local field charged under it will produce not self local fields.

To find the expression of the global index of the local CFT extension $\mathcal{T}$ defined by the coupling matrix $M$, we use the limit of the asymmetry (\ref{asyma}), (\ref{u2m}), (\ref{tt}). This is given in terms of the partition function as
\be
\mu_{\mathcal{T}}^{-1/2}= \lim_{l \rightarrow 0} \frac{Z(\tau)}{Z(-1/\tau)} =\lim_{l \rightarrow \infty} \frac{Z(-1/\tau)}{Z(\tau)} 
= \lim_{l\rightarrow \infty}\frac{\sum\limits_{rstu} S_{rs} \,M_{st}\, S^*_{tu} \,\chi_r(\tau)\,\chi_u(\bar{\tau})}{\sum\limits_{rs} M_{rs} \,\chi_r(\tau)\,\chi_s(\bar{\tau}) } \,.\label{ZZ}
\ee
Taking into account that in this last, low temperature, limit, in the denominator dominates the character of the identity, that $M_{00}=1$, and that
\be
\lim_{\tau\rightarrow \infty} \frac{\chi_r(\tau)}{\chi_1(\tau)}=\delta_{r,0}\,,
\ee
we get
\be
\mu_{\mathcal{T}}= \left(\frac{\sum_i d_i^2}{\sum_{ij} d_i\, M_{ij}\, d_j}\right)^2\,,\label{ll}
\ee
where $d_i$ are given by (\ref{tit}). 
This number is $1$ for a modular invariant model. $\mu_{\cal T}$ has to be greater or equal to one for a consistent (closed under fusion) model. As a bona fide Jones index, it also has to fall into one of the allowed values \cite{J}. Note that if the model admits a local extension of the chiral algebra, the matrix $M$, the characters, and the dimensions $d_i$ will be different expressed with respect to this new chiral algebra.  But the global index (and the partition function) of ${\cal T}$ remains the same. 

To better understand formula (\ref{ll}) first note that the index $\lambda=[{\cal T}:{\cal A}\times {\cal A}]$ between the model ${\cal T}$ and the chiral subalgebra is
\be
[{\cal T}:{\cal A}\times {\cal A}]= \sum_{ij} d_i\, M_{ij}\, d_j\,.
\ee
This is an instance of (\ref{qqq}), where the different $r$ are in correspondence with the primary fields $\phi=\phi_i\phi_{\bar{j}}$ of the theory formed by product of chiral fields, having dimension $d_i d_j$, and the multiplicities $N_r$ in (\ref{qqq}) are replaced by $M_{ij}$. The global index of the chiral algebra is the product of the global indices of the two chiral parts, each of which is given by  formula (\ref{tiro}) 
\be
\mu_{{\cal A}\times {\cal A}}=\left(\sum d_i^2\right)^2\,.
\ee
Then it follows that (\ref{ll}) is a particular instance of (\ref{twointglob}), giving 
\be
\mu_{{\cal A}\times {\cal A}}= [{\cal T}:{\cal A}\times {\cal A}]^2\, \mu_{\cal T}\,.
\ee

\subsection{Some examples with infinite index}
\label{infinite}

In the case of an infinite index, the Renyi crossing asymmetry is unbounded as we get to the limits $x\rightarrow 0,1$. In $d>2$ dimensions, where sectors come from a compact group, the infinite index corresponds to Lie groups. This case was analyzed in \cite{Casini:2019kex} in terms of the mutual information and extended in \cite{Casini:2019nmu} (appendix B) to other Renyi entropies. It was found that the mutual information difference between the charged (complete) theory ${\cal C}$ and its neutral subtheory  ${\cal T}$ diverges logarithmically for any $d>2$ as 
\be
I_1^{\cal C}-I_1^{\cal T}\simeq -\frac{{\cal G} \,(d-2)}{4}\, \log \left(1-x\right)\;.
\ee
This is for disjoint regions with boundary given by two concentric spheres of radius $R\pm\epsilon/2$, in the limit of small separation $\epsilon\ll R$. The cross ratio is $x=(4 R^2-\epsilon^2)/(4 R^2)\simeq 1$. ${\cal G}$ is the dimension of the Lie algebra. This again has purely kinematical information about the symmetry of the model.  
This term is a negative contribution to the orbifold model with respect to the mutual information of the complete one, analogous to the $-\log |G|$ contribution for the finite group case. 
For two dimensional CFT's, the analogous formula is
\be
I_1^{\cal C}-I_1^{\cal T}\simeq -U_1^{\cal T}\simeq =\frac{{\cal G}}{2}\, \log( -\log (1-x))\,.
\label{dyy}
\ee
The same results are expected to be independent of the Renyi index (see next section). These results follow from the statistics of sharp twists for a continuous group. The expectation values of the twists form a Gaussian distribution around the identity, replacing (\ref{ty}). In particular, for $d=2$ CFT's, the twist can be constructed from a current which is a Gaussian field with dimension $1$. The logarithm of the expectation value of the projector to the neutral algebra gives (\ref{dyy}) \cite{Casini:2019kex,Casini:2019nmu}. 

From (\ref{mutualr2}), for $n=2$ this behavior implies 
\be
\log \frac{Z_{\cal T}(\beta)}{Z_{\cal C}(\beta)}=\log  \langle P\rangle_\beta\sim 
\frac{{\cal G}}{2}\,\log(\beta)\,,\label{ppa}
\ee
for the torus partition functions in the limit of large temperature $\beta=2\pi l\rightarrow 0$. Again, this can be obtained from the behavior of the twist expectation values at large temperature. For $\tau=e^{i \,a_i\,L_i}$, with $L_i$ the Lie algebra base, we have
\be
\langle \tau\rangle\sim e^{-k\,\frac{|\vec{a}|^2}{l}}\,,
\ee
with $k$ a constant. The twists with non vanishing probability are highly concentrated around the identity and the eventual non abelianity is not relevant in this limit. With this expression, one can also compute the expectation values of the projectors on the different representations, namely \ref{lab}, using the continuous version of \ref{proj1}, see \cite{Casini:2019kex}. In higher dimensions the analogous to (\ref{ppa}) for the thermal partition function in a sphere is \cite{kang2023universal} 
\be
\log \frac{Z_{\cal T}(\beta)}{Z_{\cal C}(\beta)}=\log  \langle P\rangle_\beta\sim 
\frac{{\cal G} \,(d-1)}{2}\,\log(\beta)\,.\label{ppax}
\ee

A particular case where the exact result for the mutual Renyi entropies is known for any Renyi index $n$ and cross ratio is the case of a free chiral scalar, that has
\be
U_n(x)=\frac{i\, n}{2(n-1)}\int_0^{+\infty} ds\,(\coth(\pi s n)-\coth(\pi s))\, \log\left(\frac{{}_2F_1(1+is, -is, 1, x)}{{}_2F_1(1 -is, is,1,x)}\right)\,.\label{red}
\ee
This was obtained from exact diagonlaization of the density matrix in \cite{Arias:2018tmw}.
The chiral scalar torus partition function is $Z(i l)=\eta(i l)^{-1}$, where $\eta$ is the Dedekind function. The identification of $I_2(x)$ obtained from (\ref{mutualr2}) using this torus partition function, and the formula for $I_2(x)$ using (\ref{red}), gives a quite obscure identity
\be
\eta(i l)= \exp{\left[\frac{1}{12}\log\left(\frac{x(1 - x)}{2^4}\right)+i\int_0^\infty ds\,\text{csch}(2 \pi s)\log\left(\frac{{}_2F_1(1+is, -is, 1, x)}{{}_2F_1(1 -is, is,1,x)}\right)\right] }\,,
\ee
where $x$ and $l$ are related by (\ref{tres}). We have checked numerically that the identity holds. 
The chiral scalar is the fix point algebra of the chiral complex free fermion under the $U(1)$ charge symmetry. This latter model is complete. We have from the limit of (\ref{red}) \cite{Arias:2018tmw}
\be
U_n(x)\sim -\frac{1}{2}\, \log(-\log(1-x))\,,\quad x\rightarrow  1\,.
\ee
This is consistent with (\ref{dyy}) for a $U(1)$ group. It also shows a limit behavior independent of the Renyi index $n$.  

For the Virasoro net $I_2(x)$ can be computed for any central charge $c\le 1$, by using the Virasoro character of the identity and (\ref{mutualr2}). For $c< 1$ the Virasoro net has finite index. The unitary models with $c<1$ have central charge 
 \be 
c=1-\frac{6}{m(m+1)}
\ee
with $m=3,4,\cdots$. The index for the chiral Virasoro net can be obtained from the known value of $S_{00}$ for the chiral characters \cite{francesco2012conformal}
\be 
\mu(m)=\frac{m(m+1)}{8}\left[\sin\left(\frac{m+1}{m}\pi\right)\sin\left(\frac{m}{m+1}\pi\right)\right]^{-2}\,.
\ee
This diverges in the limit $c\rightarrow 1, m\rightarrow \infty$. This formula appeared previously in \cite{Kawahigashi:2002px}.

For $c\ge 1$ the character of the identity is \cite{kac1979contravariant}
\be
\chi_0(l)=\frac{(1-q) \, q^{(1-c)/24}}{\eta(\tau)}\,,\quad \,\quad q=e^{-2\pi l}\,.
\ee
For $c=1$, using the property $\eta(-1/\tau)=\sqrt{-i \,\tau}\,\eta(\tau)$, $\tau=i l$, we get for the asymmetry
\be
A_2(l)=\log\left(\frac{\chi_1(l)}{\chi_1(1/l)}\right)=\frac{1}{2} \log(l)+\log\left(\frac{1-e^{- 2\pi l}}{1- e^{-2\pi/l}}\right)\,.
\ee
In the limit $l\rightarrow 0, x\rightarrow 1$, using (\ref{29}) we get
\be
U_2(x)\sim -\frac{3}{2}\, \log(-\log(1-x))\,.
\ee
Rehren in \cite{rehren1994new} showed that the Virasoro net for $c=1$ is the fix point algebra of the chiral $\hat{su}(2)_1$ current algebra at level one, under the action of the global $SU(2)$ symmetry. This current algebra has finite index. Then, this result for $U_2(x)$ is consistent with having a fix point algebra under the action of a compact continuous group with three generators, and with the generic formula \ref{dyy}.

For the Virasoro net with $c>1$ the torus partition function does not have the usual leading asymptotics. At large temperatures it scales as $\log Z\sim \frac{\pi}{12\, l}$ and it is not proportional to the central charge,  instead of the standard Cardy behavior $\log Z\sim \frac{\pi c}{12\,l}$. This is natural since there is only the stress tensor field in the algebra. If we naively use the formula (\ref{mutualr2}) in this case we get a larger divergence for $x\rightarrow 1$
\be
U_2(x)\sim  \frac{(c-1)}{12} \log(1-x) \,.
\ee
This cannot be obtained with a compact Lie group symmetry, and a larger divergence than a double logarithmic one is indeed expected. However, the problem with this formula is rather that for large $c$ one would not expect entropies scaling with $c$ in this case. This is also suggested by holography. In the following section, we understand why the formula (\ref{mutualr2}) is not directly applicable to this case.

\subsection{Revisiting the calculation of  \texorpdfstring{$I_2(x)$}{Lg} and other Renyi entropies}
\label{compu}

Now we describe the problem of computing these quantum information quantities in more generality in terms of partition functions. This will allow us to understand the case of higher Renyi entropies. It will also allow us to clarify the scope of the previous computations. In particular, they cannot be directly applied to the Virasoro net for $c>1$.

We assume the partition functions of the model can be obtained by modifying the ones of a complete theory $\mathcal{C}$. We can picture these later partition functions as ordinary path integrals with rotational invariant Lagrangians in an arbitrary $d=2$ manifold $M$. We can use these partition functions to analyze submodels by inserting appropriate projection operators along the manifold. These projections operators commute with the stress tensor and are topological line operators. This means they are lines that can be deformed without changing the partition function as long as we do not cross any other insertions. The projector lines can be open or closed.

Denote by $P_\gamma$ the projector from $\mathcal{C}$ to a subtheory $\mathcal{T}$. Then, according to section \ref{drd}, the path integral on a manifold with the insertion of a single projector $P_\gamma$ on an open curve $\gamma$ is given by
\be
\frac{Z_\gamma(M)}{Z(M)}=\lambda^{-1}\,,
\ee
where $\lambda$ is the global Jones index associated with the inclusion $\mathcal{T}\subset {\cal C}$.\footnote{For open defects this is universal because we are assuming sharp regularizations at the edges. Then the partition function, the expectation value of the projector, is completely controlled by the strong universal fluctuations on the edges.} On the other hand, when the projector line is a closed loop, the expectation value will depend on the specific theory and homotopy class of the loop in the manifold. If the path is contractible it can be assimilated to the identity. Similarly, as the line is a projector, the insertion of two closed lines that are homotopic to each other is equivalent to the insertion of a single one. Another property is that if we have a sphere with $n$ punctures and a projector going around each of the punctures, the partition function is the same as one with projectors on only $n-1$ of the punctures. The $n-1$ projectors are enough to ensure that no charges can appear in the remaining boundary. 

Certain limits of homotopically non trivial loops $\gamma$ can be understood. If the size of $\gamma$ goes to zero, only the identity can propagate in the direction perpendicular to the loop, and the loop does not contribute. The opposite limit in which the loop stretches a long distance or the transversal direction pinch to zero size is analogous to the high temperature limit and we have an $\lambda^{-1}$ contribution.

Now, if we have two open lines $\gamma_1$, $\gamma_2$, and take the limit in which the two end-points coalesce to get a closed curve $\gamma=\gamma_1\cup \gamma_2$, it is expected that the OPE of the product of operators should give the projector of the combined line\footnote{ Note this equation does not require that each of the projectors can be written as a product of local field operators at the edge. In fact, one can see examples \cite{Benedetti:2022zbb} in which the projectors can be non-local and still we expect such a relation to hold. Such relation is a particular form of ``strong additivity'', as we comment below.}
\be
P_{\gamma_1} P_{\gamma_2}\rightarrow  P_{\gamma}\,.\label{ffff}
\ee
The idea is that a charge-less operator in the surface $\gamma$ could contain opposite charges in $\gamma_1,\gamma_2$. But these can be approximated by charge-less operators in $\gamma_1, \gamma_2$ by adding opposite charges near the coalescing boundaries of both.  In (\ref{ffff}) a factor is expected that depends on the UV nature of the regularization of the projectors and the microscopic distance between the boundaries. These factors will not matter in what follows since we will be evaluating mutual informations that are combinations between different partition functions. These combinations eliminate these factors.

\begin{figure}[t]
    \centering
    \vspace{1cm}
    \includegraphics[width=0.8 \textwidth]{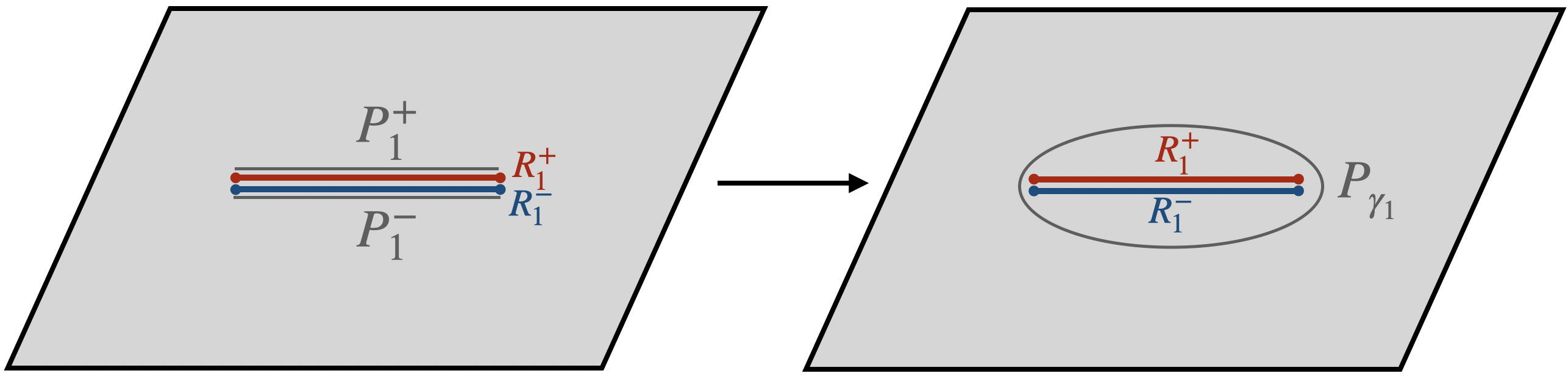}
     \caption{Projector $P_1^\pm$ lines on the two boundaries $R_1^\pm$  of an open cut along the interval $R_1$ coalesce into the protector $P_{\gamma_1}$ on the closed curved $\gamma_1 $ surrounding the interval. }
    \label{Proyectors1}
\end{figure}

Another form of understanding (\ref{ffff}) is the following. Suppose we have the theory in a circle $\gamma$ and divide the circle in two segments $\gamma_1,\gamma_2$. Equation (\ref{ffff}) follows from the stronger statement that
\be
{\cal A}(\gamma_1)\vee {\cal A}(\gamma_2)={\cal A}(\gamma)\,.
\ee
This property is called strong additivity and it is quite generally valid.\footnote{Here is it connected to (\ref{ffff}) assuming it is valid for the complete theory $\mathcal{C}$.} In particular it is valid for rational nets  \cite{kawahigashi2001multi,longo2004topological}. These are nets with finite global index and the split property.\footnote{The split property is however automatic for diffeomorphism invariant nets in the circle \cite{morinelli2018conformal}.}  However, quite insightfully, strong additivity is not valid for the Virasoro net with $c>1$ \cite{buchholz1990haag}. In this case, one can think heuristically that there are too many non local sectors in $\gamma$ to be generated by the algebras in $\gamma_1,\gamma_2$. This prevents the previous and following reasonings from being applied to this case and gives an explanation of the seemingly paradoxical result encountered at the end of the previous section. We expect the scope of the following arguments restricts to strongly additive nets.  
\begin{figure}[t]
    \centering
    \vspace{1cm}
    \includegraphics[width=0.8 \textwidth]{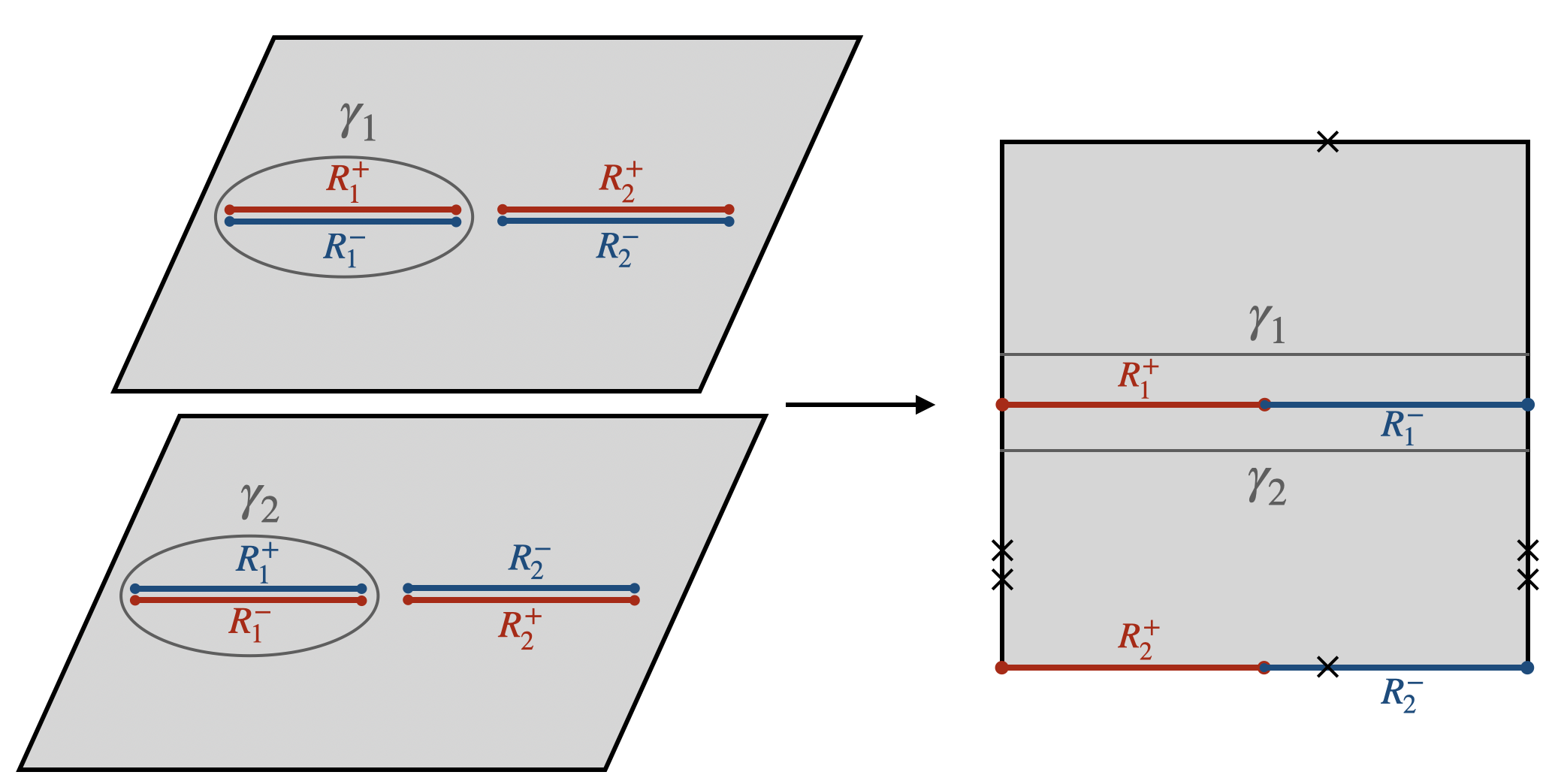}
     \caption{The $n=2$ Renyi entropy for two intervals can be computed as a partition function on the two-copy plane gluing $R^\pm_{1,2}$ cyclically as shown in the figure of the left. Inserting the corresponding projectors over $\gamma_1$ and $\gamma_2$ computes the Renyi entropy for the submodel. The resulting configuration can be mapped to a genus $g=1$ torus as in the right-hand side of the figure. It is a rectangle with opposite sides identified. Note $\gamma_1,\gamma_2$ are homotopical curves.}
    \label{Proyectors2}
\end{figure}

Now we apply these tools for the computation of the Renyi mutual informations. The vacuum density matrix for the theory $\mathcal{C}$ on a region with several intervals $R_i$ is computed with the partition function in the plane with open cuts at $R_i$. For the submodel $\mathcal{T}$ we have to project to the neutral Hilbert space in each of the intervals, inserting projector lines along the two open boundaries $R_i^\pm$ of each interval. See figure \ref{Proyectors1}.  The density matrix needs regularization at the end points of the intervals. A natural way to deal with the regularization and the projectors is to displace them a bit out of the intervals cuts. We then coalesce them as a closed curve surrounding the intervals. See figure \ref{Proyectors1}. We then have a projector curve $\gamma_i$ encircling each of the intervals $R_i$.

For a single interval region, the projector can be eliminated because it is contractible towards infinity since the plane is conformally equivalent to a sphere. This is not the case for multi-interval regions. We focus on the case of two intervals. In this case, one of the two projectors in the plane around each interval can be eliminated because the two projectors can be deformed to each other.  Computing the Renyi entropies for two intervals we have to glue $n$ copies of the cut plane in cyclic order along the boundaries. This leaves us with $n$ projectors around the $n$ copies of the first interval. For $n=2$ we have genus $1$ and the manifold can be transformed conformally to a torus. The result was described in section \ref{secII}. It is not difficult to follow the path of the projectors around the first interval to see that they are equivalent projectors along the spatial circle of the torus. See figure \ref{Proyectors2}. So the partition function $Z[il]$ that appears in the formula of $I_2(x)$ is indeed the thermal partition function of the model ${\cal T}$. When $R_1,R_2$ are near to each other the projector around $R_1$ gets squeezed. In this case, we are in the large temperature limit. For finite index the difference in  the Renyi mutual information between the models reduces to the index, 
\be
\lim_{x\rightarrow 1}\,(I^{{\cal C}}_2(x)-I^{\cal T}_2(x))=-\lim_{x\rightarrow 1}U^{\cal T}_2(x)=\log \lambda\,.
\ee
This comes from the contribution of the projector, since all other contributions, including cutoff and the universal Liouville contribution of the conformal transformation to the torus, cancel. In the same way, for the case of a compact Lie group symmetry the formula (\ref{dyy}) applies for $I_2$. 

\begin{figure}[t]
    \centering
    \vspace{1cm}
    \includegraphics[width=0.35 \textwidth]{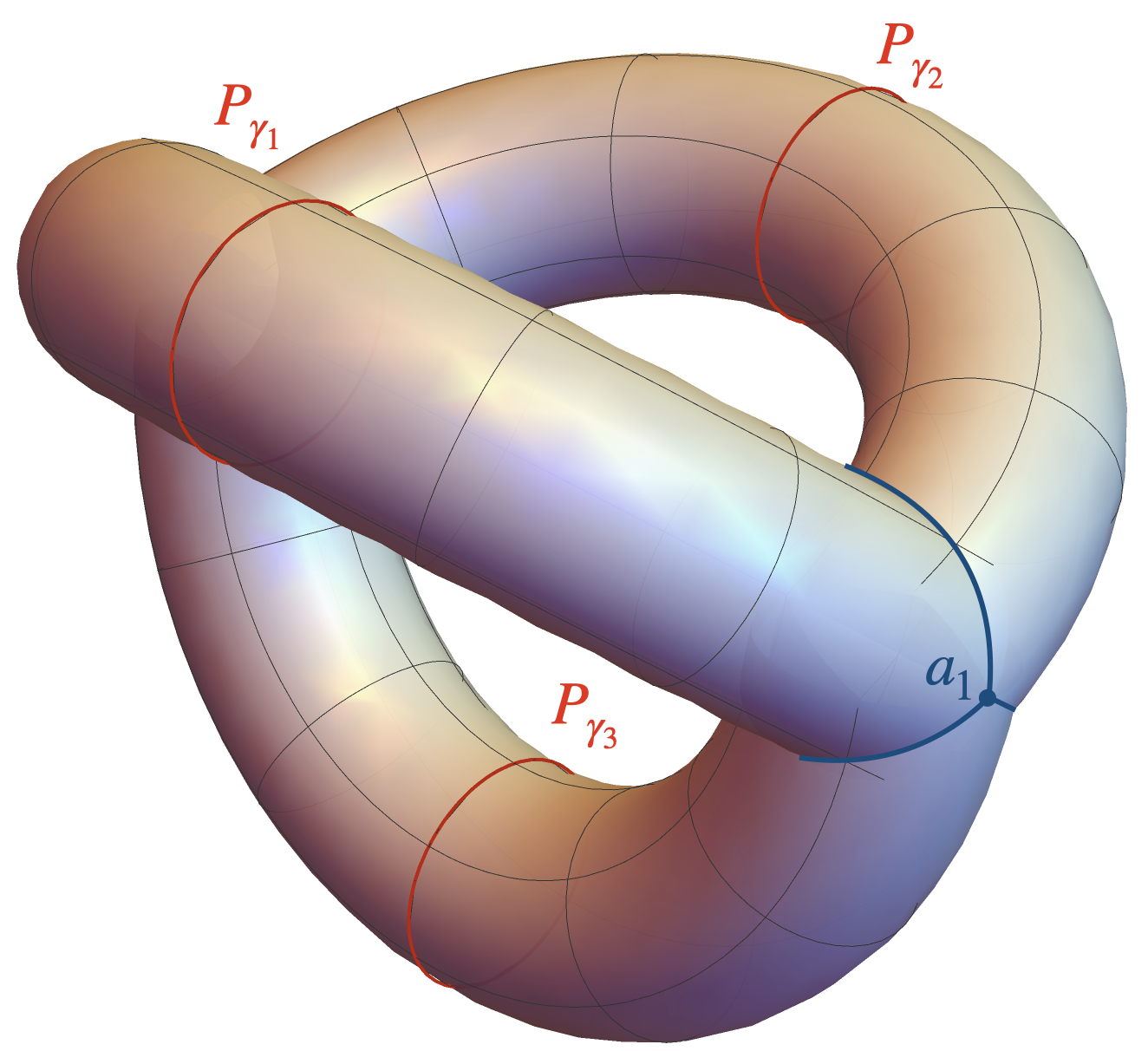}
     \caption{The $n=3$ Renyi entropy calculation for two intervals can be mapped to a genus $2$ surface with three handles and a cyclic $Z_3$ symmetry between them. When pursuing the calculation in a submodel, we have three projectors $P_{\gamma_1}$, $P_{\gamma_2}$ and  $P_{\gamma_3}$ appearing as circles around each one of the handles. One of the projectors is always redundant. The marked point $a_1$ is one of the endpoints of the interval $R_1$, and is a branching point of the original geometry. The marked three lines starting at $a_1$ are the three copies of the interval $R_1$, that connect $a_1$ with the other endpoint $b_1$ of $R_1$ (not shown). }
    \label{Proyectors3}
\end{figure}

It is not difficult to understand the generalization for higher $n$. In this case, the manifold has genus $n-1$ and can be transformed to a topology that is the one of $n$ handles joined together at the two extremes. At these extremes, there are $n$ lines joining two points. The lines are the $n$ copies of the intervals $R_1,R_2$, and the points are the end-points of the intervals that are branching points of the original geometry. See figure \ref{Proyectors3} illustrating the case $n=3$. The geometry has cyclic $Z_n$ permutation symmetry. Each projector is now a circle around one of the $n$ handles. One of the projectors is redundant. In the limit $x\rightarrow 1$ the projectors get squeezed and are in the ``high temperature limit''. As a result, and because of the factor $(n-1)^{-1}$ in the definition of the Renyi entropy, it follows that the index can be read off  from the limit of any of the Renyi mutual information     
\be
\lim_{x\rightarrow 1}\,U^{\cal T}_n(x)=-\log \lambda=-\frac{1}{2}\log \mu=-\frac{1}{2}\log \left(\sum_{r\in DHR} d_r^2\right)\,.
\ee
Hence we have a ``flat spectrum'', independent of Renyi index, for this topological contribution. This terminology comes from the analogy with the Renyi entropies of a finite dimensional density matrix proportional to the identity. This also has constant Renyi entropies $\log d$, independent of $n$, where $d$ is the Hilbert space dimension. Here we have an analogous scenario for the replica modular asymmetry which is proportional to the logarithm of the total quantum dimension of a generic modular tensor category.

It is known that there is Haag duality for two intervals if and only if there is Haag duality for any number of intervals \cite{kawahigashi2001multi}. This can be simply understood from the fact that Haag duality violations are produced by combinations of charged point-like fields in two or more intervals. The global index of the inclusion of the algebra ${\cal A}(I_1,\cdots ,I_m)\subset\hat{{\cal A}}(I_1,\cdots ,I_m)$ is $\mu^{m-1}$. This index could also be recovered from Renyi entropies and higher genus partition functions in the limit in which all intervals touch each other and charge-anticharge operators get maximal expectation values.  

\section{Modular invariance, completeness and superselection   sectors}\label{secIV}

Previously we have analyzed the direct relation between modular invariance and completeness using Renyi entropies. Completeness can be formalized as Haag duality for general regions and then violations of Haag duality, measured by e.g. Renyi entropies and order parameters, can be related to departures from modular invariance. We now present a complete mathematical proof that ultimately stems from the DHR theory of superselection sectors \cite{Doplicher:1971wk,Doplicher:1973at,Doplicher:1990pn}, generalized to two dimensional theories and subfactors in \cite{cmp/1104179464,FRS,Longo:1994xe}.

As mentioned in the introduction, that $S$ modular symmetry is related to the absence of superselection sectors was conjectured by Rehren \cite{R6}. The context at the time was the analysis and potential classification of local 2-dimensional conformal quantum field
theory $\mathcal{T}$ which irreducible extends a given pair of chiral
theories $\mathcal{A}=\mathcal{A}_L\otimes \mathcal{A}_R$.  More precisely, the mathematical structure is an irreducible inclusion of nets,
$\mathcal{A}_L(J_+)\otimes \mathcal{A}_R(J_-)\subset
\mathcal{T}(O)$, where $J_+, J_-$ are light ray intervals and $O$ is a double cone
$J_+\times J_-$. Although not standard, notice that in principle the chiral algebras $\mathcal{A}_L$ and $\mathcal{A}_R$ can be distinct. We only require the extension to be local.

The chiral algebra $\mathcal{A}$ typically has superselection sectors associated with it. Following DHR, see \cite{haag2012local} for an introduction to the subject, the localizable sectors can be encoded in the vacuum sector via endomorphisms $\rho_r$ of the chiral algebra. These endomorphisms form a category, which in $d=2$ is a braided modular category. In this context, we have further information. Since we have an inclusion of algebras $\mathcal{A}_L(J_+)\otimes \mathcal{A}_R(J_-)\subset
\mathcal{T}(O)$, we have a canonical DHR endomorphism $\th:\mathcal{A}_L(J_+)\otimes \mathcal{A}_R(J_-)\rightarrow \th(\mathcal{A}_L(J_+)\otimes \mathcal{A}_R(J_-))\subset\mathcal{A} (\mathcal{O})$ associated with the inclusion \cite{L11}. Denoting by $j_{\mathcal{A}}$ and $j_{\mathcal{T}}$ the modular conjugations associated with $\mathcal{A}(\mathcal{O})$ and $\mathcal{T}(\mathcal{O})$ respectively, the canonical endomorphism is defined as
\be 
\th(A)=j_{\mathcal{A}}j_{\mathcal{T}}\,A\,j_{\mathcal{T}}j_{\mathcal{A}}\,\quad \, A\in \mathcal{A}(\mathcal{O})\;.
\ee
This endomorphism is generically reducible. Since the chiral algebra is a tensor product of left/right algebras, the irreducible endomorphisms $\rho$ are tensor products of irreducible endomorphisms $\alpha^L_r$ and $\alpha^R_s$ of left/right algebras, namely $\rho_{rs}= \a_r^L\otimes\a_s^R$. We have a general expression of the form
\be
\th=\bigoplus_{rs} M_{rs} \a_r^L\otimes\a_s^R
\ee
This parallels, in the neutral sector of the chiral algebra, the particular combination of left and right superselector sectors that naturally appears in the extended theory $\mathcal{T}$. More concretely, the present matrix $M$ is the same as the one appearing in the partition function in \ref{partM}. We thus have $M_{00}=1$
and $M_{rs}\in\mathbb{N}$. In the mathematical literature, the matrix $M$ is called a coupling matrix (and generally denoted $Z$).  The two nets $\mathcal{A}_L$ and $\mathcal{A}_R$ define $S$ and $T$ matrices,
$S_L$, $T_L$, $S_R$, $S_R$, respectively, as in \cite{R1}. We are interested in the case where the $S$-matrices are
invertible. By the results in \cite{kawahigashi2001multi}, this invertibility,
which is called non-degeneracy of the braiding,
 holds if the nets are completely rational in the
sense of \cite{kawahigashi2001multi}.  Then Rehren considered when the
following two intertwining relations hold.
\begin{equation}
\label{stat-symm}
T_L M = M T_R,\qquad\qquad S_L M= M S_R.
\end{equation}
 This condition is the usual
modular invariance.    
 Rehren then conjectures that a possible criterium
to enforce the intertwining property is that the local 2D theory $\mathcal{C}$ does not possess nontrivial superselection sectors.
We now prove this conjecture, namely that the triviality of the superselection structures is indeed sufficient (and necessary) for the intertwining property (\ref{stat-symm}), when $\mathcal{A}_L$ and
$\mathcal{A}_R$ are completely rational.  We will also
prove that this condition is equivalent to ``maximality'' of the
extension $\mathcal{C}$ (namely the model is complete) if a certain natural symmetry holds for
$\mathcal{A}_L$ and $\mathcal{A}_R$. In this case, we will have in particular $\mathcal{A}_L=\mathcal{A}_R$ as a part
of the assumptions.

\subsection{Preliminaries}

Before presenting the proof we notice that the results for two interval Jones index (the index $\mu$ considered above) derived in \cite{kawahigashi2001multi} are also valid
for 2-dimensional nets $\mathcal{C}(\mathcal{O})$, where $O$ is a two-dimensional double cone.

In particular, we have an analogue of \cite[Corollary 32]{kawahigashi2001multi}, which reads

\begin{proposition}
\label{mu}
For a 2-dimensional completely rational net $\mathcal{C}$, the following statements are equivalent.
\begin{enumerate}
\item The net $\mathcal{C}$ has no non-trivial sector with finite dimension.
\item The net $\mathcal{C}$ has no non-trivial sector (with finite of
infinite dimension).
\item The two interval Jones index $\mu_{\mathcal{C}}$ is 1.
\end{enumerate}
\end{proposition}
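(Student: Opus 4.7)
The plan is to mimic the proof of \cite[Corollary 32]{kawahigashi2001multi}, which establishes the analogous statement for chiral completely rational nets on the real line, and transfer it to the two-dimensional setting where the role of disjoint intervals is played by a pair of spacelike separated double cones $\mathcal{O}_1,\mathcal{O}_2$ whose causal complement $(\mathcal{O}_1\cup\mathcal{O}_2)'$ is again a union of two causally disjoint double cones. The key quantity is the Jones index
\be
\mu_{\mathcal{C}} = [\hat{\mathcal{C}}(\mathcal{O}_1\cup\mathcal{O}_2):\mathcal{C}(\mathcal{O}_1)\vee \mathcal{C}(\mathcal{O}_2)]\,,
\ee
where $\hat{\mathcal{C}}(\cdot)=(\mathcal{C}(\cdot)')'$ is the dual net, and the central identity to establish is the 2D analogue of the KLM formula
\be
\mu_{\mathcal{C}} = \sum_{[\rho]\in \mathrm{DHR}(\mathcal{C})} d(\rho)^2\,,
\ee
where the sum runs over equivalence classes of irreducible DHR endomorphisms of $\mathcal{C}$.

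The first step is to verify that the ingredients used by KLM -- the split property for inclusions of double cone algebras, strong additivity, finiteness of $\mu_{\mathcal{C}}$, and Haag duality in the vacuum sector for simply connected regions -- remain available in the 2-dimensional completely rational setting, and that the canonical endomorphism $\theta$ associated to the inclusion $\mathcal{C}(\mathcal{O}_1)\vee \mathcal{C}(\mathcal{O}_2)\subset \hat{\mathcal{C}}(\mathcal{O}_1\cup\mathcal{O}_2)$ decomposes as a direct sum of tensor products of DHR sectors localized in $\mathcal{O}_1$ and $\mathcal{O}_2$ respectively, weighted by their multiplicities. Summing the dimensions and using that conjugate sectors appear paired (so that each irreducible $[\rho]$ contributes $d(\rho)^2$) then yields the formula above.

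Given this formula the equivalences are essentially algebraic. For $(3)\Rightarrow(1)$: since every irreducible sector has $d(\rho)\geq 1$ and the identity contributes $d(\mathrm{id})^2=1$, the equality $\mu_{\mathcal{C}}=1$ forces the sum to contain only the vacuum class, so no non-trivial finite-dimensional sector exists. For $(1)\Rightarrow(3)$: the same formula collapses to $\mu_{\mathcal{C}}=1$. For $(1)\Leftrightarrow(2)$ one uses that complete rationality -- in particular the finiteness of $\mu_{\mathcal{C}}$ -- forces every DHR sector to have finite statistical dimension, since the index of the subfactor $\rho(\mathcal{C}(\mathcal{O}))\subset\mathcal{C}(\mathcal{O})$ equals $d(\rho)^2$ and is controlled by $\mu_{\mathcal{C}}$; hence the absence of finite-dimensional non-trivial sectors is equivalent to the absence of any non-trivial sector.

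The main obstacle will be the first step: carefully checking that the topological argument of KLM, which relies on the circle geometry of chiral intervals and on the modular theory associated to the complement of two intervals, carries over intact to the 2D case of two double cones. In particular one must ensure that the decomposition of the canonical endomorphism into DHR sectors of $\mathcal{C}$ (rather than into sectors of some chiral subnet) is the correct one; this is where the assumption of complete rationality of the 2D net $\mathcal{C}$, understood in the appropriate sense (finite $\mu_{\mathcal{C}}$, split property, strong additivity for double cones), does the essential work.
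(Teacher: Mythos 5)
Your proposal follows essentially the same route as the paper, which itself states this proposition as an analogue of \cite[Corollary 32]{kawahigashi2001multi} and simply asserts that the two-interval index results of that reference carry over to 2-dimensional nets of double cones; your central identity $\mu_{\mathcal{C}}=\sum_r d_r^2$ and the resulting equivalences (including the use of complete rationality to force every sector to decompose into finite-dimensional irreducibles, giving $(1)\Leftrightarrow(2)$) are exactly the ingredients being invoked. The only difference is that you make explicit the transfer of the KLM argument to the two-double-cone geometry, which the paper leaves implicit.
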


This proposition then relates the absence of DHR superselection sectors of the net (a fair notion of completeness), with the validity of Haag duality for two interval regions.\footnote{One can show that triviality of the two interval inclusion of algebras implies triviality of the n-interval inclusion of algebras \cite{kawahigashi2001multi}.} 

\subsection{Modular invariance and triviality of the superselection
structure}

To prove Rehren's conjecture we now add a further equivalent statement to the previous proposition, namely invariance under $S$ and $T$ transformations.\footnote{In fact we remind that invariance under $T$ transformations is ensured by locality of the net $\mathcal{T}$, as discussed above.} We arrive at the following theorem:

\begin{theorem}
\label{modular}
Under the above conditions, the following are equivalent.
\begin{enumerate}
\item The net $\mathcal{C}$ has only the trivial superselection sector.\label{tec1}
\item The $\mu$-index $\mu_{\mathcal{C}}$ is 1.\label{tec2}
\item The matrix $M$ has the intertwining property 
(\ref{stat-symm}),
$$T_L M = M T_R,\qquad\qquad S_L M= M S_R.$$ \label{tec3}
\end{enumerate}
\end{theorem}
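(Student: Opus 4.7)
My plan is to invoke Proposition \ref{mu} for $(1)\Leftrightarrow(2)$ and then concentrate on establishing $(2)\Leftrightarrow(3)$. The $T_L M = M T_R$ half of (\ref{stat-symm}) is automatic from locality of $\mathcal{C}$, as justified around (\ref{condition1}), so the genuine content is the equivalence $\mu_{\mathcal{C}}=1 \,\Leftrightarrow\, S_L M = M S_R$. The central algebraic input is the 2D index formula
\[ \mu_{\mathcal{C}} \;=\; \frac{\mu_{\mathcal{A}_L}\,\mu_{\mathcal{A}_R}}{d(\th)^{2}}\,, \qquad d(\th)\;=\;\sum_{r,s} M_{rs}\, d^L_r\, d^R_s\,, \]
which is the instance of (\ref{twointglob}) obtained from multiplicativity of the Jones index for the inclusion $\mathcal{A}_L\otimes\mathcal{A}_R\subset\mathcal{C}$ together with the decomposition $\th=\bigoplus_{rs} M_{rs}\,\a^L_r\otimes\a^R_s$ of the canonical endomorphism, and is essentially what already underlies the derivation of (\ref{ll}).

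For $(3)\Rightarrow(2)$ a short direct calculation should suffice. Writing $u^L_r=d^L_r$ and $u^R_s=d^R_s$ for the quantum-dimension vectors, the identities $S_{0r}=d_r/\sqrt{\mu}$ and $S^2=C$ give the eigenvector relation $S\,u=\sqrt{\mu}\,e_0$ in each chiral sector. Applying $S_L M = M S_R$ to $u^R$ and reading off the vacuum component, the right-hand side reduces to $\sqrt{\mu_R}\,M_{00}=\sqrt{\mu_R}$, while the left-hand side evaluates to $d(\th)/\sqrt{\mu_L}$. Equating the two gives $d(\th)^{2}=\mu_L\,\mu_R$, and the index formula then forces $\mu_{\mathcal{C}}=1$.

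The main obstacle is the converse $(2)\Rightarrow(3)$. Here I would invoke $\a$-induction for the local 2D inclusion $\mathcal{A}_L\otimes\mathcal{A}_R\subset \mathcal{C}$ in the style of B\"ockenhauer--Evans--Kawahigashi: each DHR sector $\la$ of the chiral product produces two induced sectors $\a^{\pm}_\la$ of $\mathcal{C}$, the coupling matrix is recovered as $M_{rs}=\dim\mathrm{Hom}(\a^+_{\la_r},\a^-_{\la_s})$ with $\la_r=\a^L_r\otimes\mathrm{id}$ and $\la_s=\mathrm{id}\otimes\a^R_s$, and the irreducible DHR sectors of $\mathcal{C}$ are classified by the common $\a^{\pm}$-irreducibles. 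Triviality of $\mathrm{DHR}(\mathcal{C})$ then forces the $\a^+$- and $\a^-$-induced systems to match maximally, which is equivalent to the Q-system defining the inclusion being Lagrangian in the product modular category $\mathrm{DHR}(\mathcal{A}_L)\boxtimes\overline{\mathrm{DHR}(\mathcal{A}_R)}$; this Lagrangian property is precisely what produces the intertwining $S_L M = M S_R$. I expect the technical crux to be lifting the chiral $\a$-induction results of \cite{kawahigashi2001multi} to this 2D setting and verifying that triviality of $\mathrm{DHR}(\mathcal{C})$ really yields the full $SL(2,\mathbb{Z})$-intertwining rather than a partial commutation on sub-blocks; the reverse direction, by contrast, reduces to one line once the eigenvector identity $S u=\sqrt{\mu}\,e_0$ is in hand.
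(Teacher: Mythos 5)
Your reduction of $(1)\Leftrightarrow(2)$ to Proposition \ref{mu} and of the $T$-half of (\ref{stat-symm}) to locality matches the paper. Your proof of $(3)\Rightarrow(2)$ is correct and genuinely more direct than the paper's: evaluating the vacuum row of $S_LM=MS_R$ against the dimension vector, using $S\,u=\sqrt{\mu}\,e_0$ (which follows from $S_{0r}=d_r/\sqrt{\mu}$, symmetry and $S^2=C$) together with $M_{00}=1$, gives $\sum_{rs}d^L_r M_{rs}d^R_s=\sqrt{\mu_L\mu_R}$, and the instance $\mu_L\mu_R=[\mathcal{C}:\mathcal{A}_L\otimes\mathcal{A}_R]^2\,\mu_{\mathcal{C}}$ of (\ref{twointglob}) then forces $\mu_{\mathcal{C}}=1$. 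The paper instead passes to the maximal intermediate chiral extension (\ref{interm}), writes $M=B^t_LM^{\max}B_R$, and runs an entry-by-entry analysis of $S_L^{\max}M^{\max}-M^{\max}S_R^{\max}$ using the permutation structure of $M^{\max}$ from \cite{R4} and positivity of the branching matrices. Your vacuum-row computation bypasses all of this; the only inputs are complete rationality (to guarantee the finiteness of $[\mathcal{C}:\mathcal{A}_L\otimes\mathcal{A}_R]$ and the standard properties of the chiral $S$-matrices) and the index relation you state, both of which the paper also uses.

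The direction $(2)\Rightarrow(3)$ is where your proposal has a genuine gap: you assert that triviality of $\mathrm{DHR}(\mathcal{C})$ ``forces the $\alpha^{\pm}$-induced systems to match maximally'' and that the resulting Lagrangian property ``produces the intertwining,'' but you supply no mechanism for either step and you flag them yourself as unverified. The mechanism for the first step is a counting argument for which you already hold the key ingredient: $\mu_{\mathcal{C}}=1$ plus your index formula gives $d(\theta)^2=\mu_L\mu_R$ (the Lagrangian dimension condition), but to convert this into a statement about the coupling matrix one needs Rehren's structure theorem \cite[Corollary 3.5]{R4} — the dual canonical endomorphism of the maximal intermediate extension decomposes as in (\ref{deco}) over \emph{closed} subsystems $\Delta^{\max,0}_{L},\Delta^{\max,0}_{R}$ related by a fusion-rule isomorphism — whence $\mu_L^{\max}\mu_R^{\max}=\mu_{\mathcal{C}}\,\mu_0^2$ with $\mu_0\le\mu_L^{\max}$ and $\mu_0\le\mu_R^{\max}$; only the combination of these inequalities with $\mu_{\mathcal{C}}=1$ yields $\Delta^{\max,0}=\Delta^{\max}$. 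The second step — that the resulting full permutation-type $M^{\max}$ intertwines $S^{\max}$ and $T^{\max}$, and that this descends through the branching matrices to $S_LM=MS_R$ — is also not formal: it is \cite[Theorem 6.5]{BE4} combined with \cite[Corollary 3.8]{R6}. Your $\alpha$-induction and Lagrangian-algebra language is a correct modern packaging of the same content, but as written this direction is an outline whose crux is deferred rather than a proof.
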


\begin{proof}
We first note that whenever we have an inclusion of the form  $\mathcal{A}_L(J_+)\otimes \mathcal{A}_R(J_-)\subset
\mathcal{C}(O)$, we also have an intermediate inclusion of algebras of the form
\begin{equation}
\label{interm}
\mathcal{A}_L(J_+)\otimes \mathcal{A}_R(J_-)\subset \mathcal{A}_L^{\max}(J_+)\otimes \mathcal{A}_R^{\max}(J_-)
\subset \mathcal{C}(O)\;,
\end{equation}
see \cite{R4}. Equivalently, for any inclusion of the original form there is always a maximal chiral extension providing an intermediate inclusion. Denoting by $B_L$ and $B_R$ the branching (generically rectangular) matrices relating the original chiral algebra with the maximal ones, we can write the coupling matrix as
\be
M=B^t_L M^{\max} B_R\;,
\ee
where $M^{\max}$ is the coupling matrix for $ \mathcal{A}_L^{\max}(J_+)\otimes\mathcal{A}_R^{\max}(J_-)
\subset \mathcal{C}(O)$. Note also that the index of the inclusion $\mathcal{A}_L(J_+)\otimes \mathcal{A}_R(J_-)\subset
\mathcal{C}(O)$ is automatically finite, because $\mathcal{A}_L$ and
$\mathcal{A}_R$ are completely rational
and thus \cite[Proposition 2.3]{Kawahigashi:2002px} applies. 
Equivalence between (\ref{tec1}) and (\ref{tec2})
has been already proved in Proposition \ref{mu}.

Now we prove that (\ref{tec2}) implies (\ref{tec3}).
First, Corollary 3.2 in \cite{R6} says that $M$ intertwines
the diagonal matrices of statistics phases. Let $\Delta_L$, $\Delta_R$, $\Delta_L^{\max}$, $\Delta_R^{\max}$,
and $\Delta_{\mathcal{C}}$ be the systems of
irreducible DHR endomorphisms for $\mathcal{A}_L$, $\mathcal{A}_R$, $\mathcal{A}_L^{\max}$,
$\mathcal{A}_R^{\max}$ and $\mathcal{C}$, respectively.   We denote their two interval indices
indices, namely the square sums of the statistical dimensions in
each of the systems,  by $\mu_L$, $\mu_R$, $\mu_L^{\max}$, $\mu_R^{\max}$, and
$\mu_{\mathcal{C}}$, respectively.
By Rehren's result \cite[Corollary 3.5]{R4},
the dual canonical endomorphism for the subfactor
$\mathcal{A}_L^{\max}(J_+)\otimes \mathcal{A}_R^{\max} (J_-)\subset \mathcal{C}(O)$ decomposes as
\be
\label{deco}
\theta_{\max}=\bigoplus_r \sigma_r^L \otimes \tilde\sigma_r^R\;.
\ee
Here $\{\sigma_r^L\}\subset \Delta_L^{{\max},0}$ and 
$\{\tilde\sigma_r^R\}\subset \Delta_R^{{\max},0}$ make
closed subsystems of irreducible endomorphisms with $\Delta_L^{{\max},0}\subset \Delta_L^{\max}$
and $\Delta_R^{{\max},0}\subset \Delta_R^{\max}$. Also, the map $\sigma_r\mapsto \tilde\sigma_r$ gives a fusion
rule isomorphism.  Therefore the indices of $\Delta_L^{{\max},0}$
and $\Delta_R^{{\max},0}$ are the same and we denote them by
$\mu_0$. We can then compute the index of the subfactor
$\mathcal{A}_L^{\max}(J_+)\otimes \mathcal{A}_R^{\max} (J_-)\subset \mathcal{C}(O)$. In \cite{Longo:1994xe} it was shown that such index equals the dimension of the canonical endomorphism. Since $d_{\sigma_r}=d_{\tilde\sigma_r}$, this dimension is then equal to $\mu_0$. We now use again Proposition 24 in \cite{kawahigashi2001multi}, which was used above in eq. \ref{twointglob}. There it was shown that for two models $\mathcal{T}\subset \mathcal{C}$, the global (two interval) indices of the two models satisfy the following relation
\be
\mu_{\mathcal{T}}= \lambda^2\,\mu_{\mathcal{C}}\,.
\ee
In the present scenario $\mathcal{T}\rightarrow \mathcal{A}_L^{\max}(J_+)\otimes \mathcal{A}_R^{\max} (J_-)$ and we obtain
\be
\mu^{\max}_L\, \mu^{\max}_R= \mu_{\mathcal{C}}\, \mu_0^2\;.
\ee
Since we seek to prove that (\ref{tec2}), namely $\mu_{\mathcal{C}}=1$, implies (\ref{tec3}) we arrive at $\mu^{\max}_L\, \mu^{\max}_R= \mu_0^2$. We also know that $\mu_0\le\mu_L^{\max}$ and $\mu_0\le\mu_R^{\max}$ due to
$\Delta_L^{{\max},0}\subset \Delta_L^{\max}$
and $\Delta_R^{{\max},0}\subset \Delta_R^{\max}$. Then we have
$\mu_0=\mu_L^{\max}=\mu_R^{\max}$, which in turn implies
$\Delta_L^{{\max},0}=\Delta_L^{\max}$
and $\Delta_R^{{\max},0}=\Delta_R^{\max}$.

We conclude that if the extended model $\mathcal{C}$ has trivial index $\mu_{\mathcal{C}}=1$, then the canonical endomorphism contains all sectors in $\mathcal{A}^{\max}$. Then Corollary 3.8 in \cite{R6} gives the desired conclusion based on 
$S_L^{\max} M^{\max}=M^{\max} S_R^{\max}$ and proven by
\cite[Theorem 6.5]{BE4}.

Now we prove that (\ref{tec3}) implies (\ref{tec2}).
First note that
\be
M^{\max}=(M^{\max}_{\tau\sigma})_{\tau\in\Delta_L^{\max},
\sigma\in\Delta_R^{\max}}
\ee
contains a permutation matrix
\be
(M^{\max}_{\tau\sigma})_{\tau\in\Delta_L^{{\max},0},
\sigma\in\Delta_R^{{\max},0}}=
(\delta_{\pi(\tau),\sigma})\;.
\ee
as a submatrix, where $\pi$ gives the permutation, and the other parts of the matrix $M^{\max}$ are 0. These arises from \cite[Corollary 3.5]{R4}. Since we have the decomposition $M=B^t_L M^{\max} B_R$, assuming (\ref{tec3}) we have
\begin{equation}
\label{intertwine}
B^t_L(S_L^{\max} M^{\max}-M^{\max} S_R^{\max})B_R=0
\end{equation}
by \cite[Theorem 6.5]{BE4}.  Note that for the branching matrix
$B_R=(b_{R,\sigma\rho})$ with $\sigma\in \Delta_R^{\max}$,
$\rho\in\Delta_R$, we have $b_{R,\sigma\rho}=\lan \a_\rho,\sigma\ran$.
Thus for $\rho=0$, for $\rho$ equal to the identity, we have
$b_{R,\sigma 0}=\delta_{\sigma 0}$.  A similar identity holds for
$B_L$.  Then the $(0,0)$-entry of the equation (\ref{intertwine})
gives $S_{L,00}=S_{R,00}$.  Corollary 3.6 in \cite{R5} now shows
that the entries of the matrix  
\be
R=(R_{\tau\sigma})=S_L^{\max} M^{\max}-M^{\max} S_R^{\max}
\ee
are given as follows:
\begin{equation}
\label{eq1}
R_{\tau\sigma}=\left\{
\begin{array}{ll}
0, &\hbox{if $\tau\in\Delta_L^{{\max},0}$, $\sigma\in\Delta_R^{{\max},0}$,}\\
S^{\max}_{L,\tau,\pi^{-1}(\sigma)}, &
\hbox{if $\tau\notin\Delta_L^{{\max},0}$, $\sigma\in\Delta_R^{{\max},0}$,}\\
-S^{\max}_{R,\pi(\tau),\sigma}, &
\hbox{if $\tau\in\Delta_L^{{\max},0}$, $\sigma\notin\Delta_R^{{\max},0}$,}\\
0, &\hbox{if $\tau\notin\Delta_L^{{\max},0}$,
$\sigma\notin\Delta_R^{{\max},0}$}.
\end{array}
\right.
\end{equation}
Then the $(0,\sigma)$-entry of the product $B^t_L R$ is now computed to be
\begin{equation}
\label{eq2}
(B^t_L R)_{0\sigma}=\left\{
\begin{array}{ll}
0, &\hbox{if $\sigma\in\Delta_R^{{\max},0}$,}\\
-S^{\max}_{R,0\sigma}, &
\hbox{if $\sigma\notin\Delta_R^{{\max},0}$.}
\end{array}
\right.
\end{equation}
Since we have $B^t_L R B_R=0$, the entries of $B_R$ is non-negative,
and $S^{\max}_{R,0\sigma} > 0$, we have 
$B_{R,\sigma\rho}=0$ for $\sigma\notin\Delta_R^{{\max},0}$ and all
$\rho$.  This means that we have no $\sigma\notin\Delta_R^{{\max},0}$ 
and thus $\Delta_R^{{\max},0}=\Delta_R^{\max}$.   Similarly, we have
$\Delta_L^{{\max},0}=\Delta_L^{\max}$.
By the above arguments
for $(2)\Rightarrow(3)$, we know that this implies $\mu_B=1$.
\end{proof}

\subsection{Modular invariance and maximality of the extension}

In this section we further show that if $\mathcal{A}_L$ and $\mathcal{A}_R$ have a certain
natural symmetry property, then the conditions in
Theorem \ref{modular} are also equivalent to the maximality
of the extension $\mathcal{C}$ of $\mathcal{A}_L\otimes \mathcal{A}_R$. We  study the
Longo-Rehren subfactors \cite{Longo:1994xe}
for this purpose.

We now assume that the 2-dimensional net $\mathcal{C}$ is
invariant under the reflection $x\mapsto -x$ in the
Minkowski spacetime coordinates $(x,t)$.
This assumption in particular implies that we
have a natural identification
$\mathcal{A}_L=\mathcal{A}_R$ of
the left and right chiral conformal nets.
For this reason, we now drop the superscripts
L and R in the above notation.  Then the 
decomposition (\ref{deco}) of 
the dual canonical endomorphism for the subfactor
$\mathcal{A}_L^{\max}(J_+)\otimes \mathcal{A}_R^{\max} (J_-)\subset \mathcal{C}(O)$ is now of the form
\be
\theta_{\max}=\bigoplus_r \sigma_r \otimes \sigma_r\;,
\ee
where the system $\{\sigma_r\}$ is a
closed subsystem of irreducible endomorphisms with $\Delta^{\max,0}\subset \Delta^{\max}$.
This gives an intermediate subfactor corresponding to
the dual canonical endomorphism
\be
\theta'=\bigoplus_{\sigma\in\Delta^{\max}}
\sigma \otimes \sigma\;,
\ee
arising from the Longo-Rehren subfactor.  If we have $\Delta^{{\max},0}\neq \Delta^{\max}$, this gives a proper extension of a 2-dimensional net $\mathcal{C}$. Noting that the ``only if'' part is trivial, we have proved the following

\begin{theorem}
\label{maximality}
Under the above conditions including the invariance
of the 2-dimensional net $\mathcal{C}$
under the reflection $x\mapsto -x$, 
we have the three equivalent conditions in
Theorem \ref{modular} if and only if the net
$\mathcal{C}$ is maximal with respect to 
irreducible inclusions of 2-dimensional nets.
\end{theorem}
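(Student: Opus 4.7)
The plan is to establish the equivalence by proving the contrapositive of the nontrivial direction. The ``only if'' part of the biconditional, i.e., that the three conditions imply maximality, is immediate from DHR theory: any proper irreducible extension $\mathcal{C}\subsetneq\mathcal{C}'$ of 2-dimensional nets produces a nontrivial dual canonical endomorphism of $\mathcal{C}$, which in turn yields nontrivial DHR sectors for $\mathcal{C}$, violating condition \ref{tec1} of Theorem \ref{modular}. Equivalently, $\mu_{\mathcal{C}}>1$ in that scenario, contradicting \ref{tec2}. So the real work is to show that maximality of $\mathcal{C}$ forces the three conditions to hold.

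For the ``if'' direction, I would argue by contrapositive: assume the conditions of Theorem \ref{modular} fail, so in particular $\mu_{\mathcal{C}}>1$. Following the analysis already carried out in the proof of Theorem \ref{modular}, this is equivalent to the strict inclusion $\Delta^{\max,0}\subsetneq \Delta^{\max}$. Now the reflection invariance $x\mapsto -x$ lets me identify $\mathcal{A}_L=\mathcal{A}_R=\mathcal{A}$ and correspondingly $\Delta_L^{\max}=\Delta_R^{\max}=\Delta^{\max}$. Under this identification Rehren's decomposition $\theta_{\max}=\bigoplus_r \sigma_r^L\otimes \tilde\sigma_r^R$ becomes the diagonal form $\theta_{\max}=\bigoplus_r \sigma_r\otimes \sigma_r$, with the $\sigma_r$ ranging over the closed subsystem $\Delta^{\max,0}\subsetneq \Delta^{\max}$.

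The key step is then to invoke the Longo-Rehren construction from \cite{Longo:1994xe}: the larger diagonal endomorphism
\be
\theta'=\bigoplus_{\sigma\in\Delta^{\max}}\sigma\otimes\sigma
\ee
is always realizable as the dual canonical endomorphism of a local 2-dimensional extension of $\mathcal{A}_L^{\max}\otimes \mathcal{A}_R^{\max}$, the Longo-Rehren subfactor. Since $\theta_{\max}$ is a proper subobject of $\theta'$ in $\Delta^{\max}\times \Delta^{\max}$, this Longo-Rehren extension $\mathcal{C}'$ is strictly larger than $\mathcal{C}$ and sits in an intermediate inclusion
\be
\mathcal{A}_L^{\max}(J_+)\otimes \mathcal{A}_R^{\max}(J_-)\subset \mathcal{C}(O)\subsetneq \mathcal{C}'(O).
\ee
This directly contradicts the maximality of $\mathcal{C}$, completing the contrapositive.

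The main obstacle is the last step, i.e., verifying that the Longo-Rehren construction indeed produces a genuine 2-dimensional local irreducible extension of $\mathcal{C}$ and not merely a formal enlargement of the dual canonical endomorphism at the level of chiral algebras. This relies on the functoriality of the Longo-Rehren subfactor together with the fact that the closedness of $\Delta^{\max,0}$ under fusion (which is guaranteed by the structure of the canonical endomorphism for an irreducible inclusion of local nets) ensures that the intermediate object produced is again a local net of subfactors. Once this is granted, the rest is essentially bookkeeping with the diagonal embedding $\theta_{\max}\hookrightarrow\theta'$ and the equivalence of statements in Theorem \ref{modular}.
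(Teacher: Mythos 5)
Your proposal is correct and follows essentially the same route as the paper: the trivial direction via the index/DHR sector considerations, and the substantive direction by the contrapositive, using the diagonal form $\theta_{\max}=\bigoplus_r\sigma_r\otimes\sigma_r$ over the closed subsystem $\Delta^{\max,0}$ and the Longo--Rehren subfactor over all of $\Delta^{\max}$ to produce a proper local extension whenever $\Delta^{\max,0}\subsetneq\Delta^{\max}$. The point you flag as the main obstacle (that the Longo--Rehren object is a genuine local irreducible $2$-dimensional extension containing $\mathcal{C}$) is handled in the paper exactly as you propose, by appeal to \cite{Longo:1994xe} and \cite{R4}.
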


\section{Completeness and thermal partition function in higher dimensions}\label{secV}

In this section, we make progress towards understanding how to detect failures of completeness from the CFT data in higher dimensions. We will only discuss the case of Haag duality for disjoint regions, or DHR completeness. This is the analog to the case studied above for $d=2$. In higher dimensions, the DHR theorem states that a non complete model $\mathcal{T}$ in this sense is the fix point algebra of a complete model $\mathcal{C}$ under some compact symmetry group $G$, namely it is always the case there is a group average $\epsilon:\mathcal{C}\rightarrow \mathcal{T}$. If the group is finite the global Jones index is the size of the group $|G|$.

Let us recall the salient features of the previous discussion to compare with a possible generalization to higher dimensions. In $d=2$ the partition function of the torus has the geometrical modular symmetry for complete theories. In this case, since the low temperature expansion (\ref{aba}) does not contain a constant term because of the discreteness of the spectrum, the same is implied for the high temperature expansion (\ref{aba1}). Then, any failure of completeness, that breaks the geometric modular symmetry, will give place to a subleading term in (\ref{aba1}). This subleading term allows to read off cleanly the global index of the model.  

In higher dimensions, the same type of geometric symmetries is expected for partition functions in multi-torus topologies $S^1\times\cdots \times S^1$. The index will make a contribution to these partition functions at high temperature, exactly as discussed in section \ref{secIII}. Unfortunately, these partition functions are not related to the CFT data in a simple way and it is not known how to compute them in a general case. 

On the other hand, the partition function in the manifold $S^{d-1}\times S^1$ is the thermal partition function for the theory on the cylinder. The Hamiltonian eigenvalues are directly related to the field scaling dimensions. However, there is no geometric symmetry in this higher dimensional case. This in turn does not allow to connect with the low temperature partition function. It then does not a priori forbids dynamical constant contributions in the free energy in the high temperature limit. These possible contributions in principle would not allow to isolate the index from the constant term. 

However, there is a general argument suggesting that in even spacetime dimensions the constant term is absent for complete theories. At large temperatures, the partition function is dominated by local high energy excitations whose statistics is independent for spatially separated points at distances larger than the thermal wavelength.  
Then, we expect an extensive leading scaling of the free energy $\log Z=c_{d-1} \, R^{d-1}\, T^{d-1}+\cdots$, where $c_{d-1}$ is an analog of the Stefan-Boltzmann constant, and where $R$ is the sphere radius. The subleading perturbative terms in the large temperature expansion are expected to be local corrections, proportional to polynomials on the curvature. There are also non perturbative corrections that are exponentially small at high temperature. This framework is called the high temperature effective action \cite{jensen2012towards,banerjee2012constraints,benjamin2024universal}. The curvature corrections introduce even powers $R^{2k}$ of the sphere radius, corresponding to powers $T^{d-1-2 k}$ in the large temperature expansion. For even dimensions, this does not contain a constant term. For example, in $d=4$, and setting $R=1$, we will have\footnote{There are particular ``pathological'' exceptions such as the free scalar. One way to think about this is that in the small $S^1$ size, the dimensionally reduced theory in $S^{d-1}$ is generically gapped. This fails for the scalar which contains a gapless sector \cite{benjamin2024universal,lei2024modularity}.}
\be
\log Z\simeq  c_{3}\, T^3+c_1\, T^1+c_{-1}\, T^{-1}+\cdots  \label{parame}
\ee
For the neutral subtheory $\mathcal{T}$, there are global constraints that go beyond this local effective description. Then, for such subtheory $\mathcal{T}$ we will expect instead
\be
\log Z\simeq  c_{3}\, T^3+c_1\, T^1-\log |G|+\cdots  \
\ee
For Lie group symmetry, the term (\ref{ppax}) logarithmic in the temperature is expected. And for different irreducible representations, the same behavior as in \ref{lab} is also expected. 

This neatly identifies the non completeness of the model from CFT data and, in this case, only the spectrum of conformal dimensions is needed. This solves the problem, at least for even dimensions.  In contrast, for odd dimensions it is not possible to avoid dynamical constant terms in the free energy $\log Z$, and a finite index seems difficult to isolate.

The density of operators $\rho(\Delta)$ as a function of scaling dimension is obtained from the partition function as a Laplace anti-transform. The large  $\Delta$ regime involves a saddle point approximation in the regime of large temperature. Then, expansions of $\rho(\Delta)$ for large $\Delta$ depend on the expansions of $\log Z(\beta)$ for small $\beta$. For finite index, the multiplicative  change in $Z$ leads to 
\be
\log \rho\simeq \log \rho_c -\log |G|\,,\label{doce}
\ee
where $\rho_c$ is the density for a complete theory. The constant term changes the functional form of the complete model and is recognizable by the knowledge of  $\rho$ itself. For example, in $d=2$, from the leading term including the  prefactor in Cardy formula for a modular invariant theory \cite{carlip2000logarithmic},  we get 
\be
\rho\simeq \frac{1}{\lambda}\,\left(\frac{c}{96 \Delta^3}\right)^{\frac{1}{4}}\, e^{2\pi \sqrt{\frac{c \,\Delta}{6}}}\, \left(1+{\cal O}(\Delta^{-1/2})\right)\,.
\ee
The index can be identified from the density of states at large $\Delta$ from the anomalous prefactor, or from the constant term in the expansion 
\be
\log \rho\simeq 2\pi \sqrt{\frac{c \,\Delta}{6}}-\frac{3}{4}\,\log \Delta +\frac{1}{4}\,\log (c/96)-\log \lambda+{\cal O}(\Delta^{-1/2})\,.
\ee
Expansions of the density of states for complete theories in higher dimensions are likewise parametrized by the coefficients of the expansion (\ref{parame}) of the high temperature partition function, see \cite{benjamin2024universal}. The order of the group $|G|$ of ``missing symmetries'' can be identified in the same way from the anomalous constant coefficient in (\ref{doce}). 

For Lie group symmetries we have a correction of the free energy that is logarithmical in the temperature (\ref{ppax}). This is not expected for complete models in odd or even dimensions and hence gives a way to identify the non completeness in any dimensions in this case. In terms of the density of states, the saddle point identification $\beta \sim \Delta^{-1/d}$, gives  
\be
\log \rho\simeq \log \rho_c -\frac{{\cal G} \,(d-1)}{2 \,d}\,\log(\Delta)\,.
\label{add}
\ee
 which also allows to detect the presence of an incomplete model from the anomalous coefficient of the logarithmic term. For example, for $d=3$  the logarithmic term for complete theories is universally $-4/3\, \log(\Delta)$ \cite{benjamin2024universal}, and a Lie group orbifold would add the second term in (\ref{add}).

As a final commentary, it is interesting to compare with the case of the entanglement entropy of a sphere. This is equivalent to the computation of the partition function in a $d$-dimensional Euclidean sphere \cite{Casini:2011kv}. Again, the index is expected to appear in the constant term of the entanglement entropy \cite{Casini:2019kex}. In odd dimensions, this will be mixed up with the constant $F$ term. In even dimensions, although no universal constant term is expected (as the thermal partition function above), there is a logarithmically divergent term due to the trace anomaly on the sphere. This logarithmic term is absent for $S^{d-1}\times S^1$ \cite{benjamin2024universal}. The uncertainty in the normalization of the cutoff in the logarithmic term then ruins the identification of the index in the constant term of the entanglement entropy. Still, such index appears clearly in the mutual information and related quantities, as discussed above.

\section{Discussion}\label{SecVI}

In this article we have exposed the direct connections between modular invariance, locality, and completeness. While T-duality is required for locality of the QFT, S-invariance is not. We have seen that S-invariance is a form of completeness of the QFT.  More precisely it is equivalent to the validity of Haag duality for generic regions, or to the absence of superselection sectors. From the perspective of Haag duality, it admits a natural generalization to higher dimensions. These connections have been explained using quantum information techniques, partition function with insertions of topological defects, and the DHR theory of superselection sectors. We now wish to discuss connections with other works and some open problems.

\textbf{Quantum Field Theory Axiomatics:} A general framework for a description of CFT's as relativistic quantum theories is given by the algebraic approach \cite{haag2012local}. This incorporates as basic elements the quantum operators and locality. Important advances in the understanding and classification of CFT's have been accomplished along this line of research, e.g \cite{Kawahigashi:2002px,Kawahigashi:2003gi}. A parallel framework in terms of partition functions and Hilbert spaces, as in the approach to topological field theories, has the advantage to put into first focus the geometrical aspects. Segal's axioms \cite{Segal1988} offer such a purely geometrical description incorporating an associative fusion algebra.  One of the conclusions of this paper is that a purely geometric description in terms of partition functions as in Segal's axioms must correspond only to complete relativistic QFT models. Non complete models contain additional elements on top of the geometry. But they are perfectly sound as relativistic QFT. Equivalently, although
associativity, closedness of the fusion algebra and the fact that three-point functions are enough to solve the CFT follow nicely and geometrically from Segal's axioms, this does not mean that all CFT's that satisfy those three properties satisfy Segal's axioms. It would be interesting to show that this completeness implicit assumption is the only or main difference between both axiomatics when referring to CFT's. 

\textbf{Partition functions and modular theory:} There is an understanding of generic partition functions in terms of the basic elements of the quantum theory. The simplest example is the torus partition function which can be written in terms of the spectrum of the circle Hamiltonian. Renyi entropies for integer $n$ also establish such a connection for higher genus manifolds. In this sense, it is interesting to note that there is no current understanding of Renyi mutual informations in terms of mathematically well defined quantities in the quantum theory (except for $n=1$). It would be natural to explore integer powers of the modular operator in connection to higher genus partition functions. There is also the interesting case of the Virasoro net for $c>1$ approached above. This does not satisfy strong additivity. A better understanding of the origin of this property is needed for the construction of adequate partition functions, and of Renyi mutual information, in this case.  Another interesting observation is that for $d=2$ CFT's Haag duality for two intervals implies Haag duality for any number of intervals \cite{kawahigashi2001multi}. This should be reflected in that modular invariance for the torus partition function should imply modular invariance of higher genus partition functions and be related to the uniqueness of the partition function in more general topologies and geometries. In addition, notice that Haag duality is not restricted to the vacuum state, and then one may be able to use it for other states representing more general Euclidean geometries.

\textbf{Generalized symmetries from bootstrap data:} We have solved this problem here for some cases with global symmetries. But of course, an interesting problem is to understand the presence of, say, one-form symmetries, related to the violation of Haag duality in a ring, in the bootstrap data. Haag duality for balls instructs us that this should be possible. For non-abelian gauge theories in the lattice, the construction of non-abelian Wilson and 't Hooft loops in terms of local operators was described in \cite{Casini:2020rgj}. It is an important open problem to find the footprints of Wilson and 't Hooft loops in the CFT bootstrap data. Similar questions have been pursued in \cite{Hofman:2024oze}. 

\textbf{Towards a classification of QFT's higher form symmetries:} Regarding the classification of QFT's there are various interesting questions related to the present work. Is any consistent model $\mathcal{T}$ a submodel of a complete one $\mathcal{C}$? We have seen this is the case for finite index in $d=2$, where indeed this completion is non-unique and in general unrelated to a group. For $d>2$ it always exists and is unique for theories with DHR sectors in higher dimensions, this is DHR theorem. This particular type of HDV for orbifolds can be eliminated by extending the neutral theory to the charged one. This extension does not modify the vacuum correlators of the neutral algebra. It does not seem to exist any extension that is additive and complete for higher form symmetries in higher dimensions. Such a completion, that includes particles charged under a gauge theory, would modify the vacuum correlators of the original algebra, and would not be just an extension but a dynamical modification of the theory. Still, it is an important problem to classify the possible fusion categories arising for these type of generalized symmetries.  In that direction see  \cite{Casini:2020rgj} and forthcoming work \cite{ClassGen}.

\textbf{Quantum Gravity:} The necessity of modular invariance has appeared repeatedly in the field of quantum gravity, and more particularly within String Theory, e.g. in the worldsheet Polyakov formulation \cite{Polyakov:1981rd,Friedan:1985ge,Polchinski:1998rq}. In such formulation one not only has a $d=2$ CFT, but a CFT coupled to gravity. These and other aspects bring the necessity of modular invariance. More recently, e.g. \cite{Montero:2016tif,Benini:2022hzx,Heckman:2024obe}, modular invariance has been used in relation with the weak gravity conjecture \cite{Harlow:2022ich} and other conjectures in quantum gravity such as the absence of generalized symmetries \cite{Polchinski:2003bq,Banks:2010zn,Harlow:2018tng,Review}. Our general results relating modular invariance and completeness might then help to better understand the lack of global symmetries in quantum gravity, and potentially provide a unified perspective on several features (e.g. necessary existence of particular brane solutions) of string theory.

\section*{Acknowledgements}
H.C. and J.M thanks useful discussions with G. Torroba, Y. Choi, B. C. Rayhaun, and Y. Zheng.
The work of V.B, H.C and J.M is partially supported by CONICET, CNEA and Universidad Nacional de Cuyo, Argentina.  The work of V.B and J.M  was performed in part at the Aspen Center for Physics during the workshop ``The Microscopic Origin of Black Hole Entropy'', supported by a grant from the Simons Foundation (1161654, Troyer). Y.K. acknowledges financial support of Japan Science and Technology Agency (JST) through
CREST program JPMJCR18T6 and Adopting Sustainable Partnerships for Innovative Research Ecosystem (ASPIRE), Grant Number JPMJAP2318. R.L. acknowledges the Excellence Project 2023–2027 MatMod@TOV awarded to the Department of Mathematics, University of Rome Tor Vergata. J.M acknowledges hospitality and support from the International Institute of Physics, Natal, through Simons Foundation award number 1023171-RC.

\appendix

\section{Universal charged density of states in ``non-invertible symmetry'' scenarios}
\label{app}

In this appendix we show how the derivation of the universal charged density of states for QFT's with finite group symmetries described in \cite{magan2021proof} naturally generalizes to more general scenarios, such as the ones appearing in QFT's in $d=2$. This derivation does not assume conformal invariance.

The derivation uses a relation between relative entropies of quantum complementary algebras unraveled in \cite{Casini:2019kex,Magan:2020ake,Casini:2020rgj}. A proof for the type III algebras was given in \cite{hollands2020variational} (see also \cite{Xu:2018uxc}), following \cite{longo2018relative}.  To explain this relation consider an inclusion of algebras $\mathcal{N}\subset\mathcal{M}$. In this scenario there is a space of conditional expectations $\varepsilon : \mathcal{M}\rightarrow \mathcal{N}$, which are completely positive linear maps from the large algebra to the small algebra satisfying the bimodule property
\be
\hspace{-1mm} \varepsilon\left(n_{1}\,m\,n_{2}\right)=n_{1}\varepsilon\left(m\right)n_{2}\,,\hspace{3mm} \forall m\in\mathcal{M},\,\forall n_{1},n_{2}\in\mathcal{N}.\label{ce_def_prop}
\ee
See \cite{L11,Magan:2020ake} and references therein for the characterization of such spaces of conditional expectations. Crucially, given a state $\omega_{\mathcal{N}}$ in $\mathcal{N}$, and a conditional expectation $\varepsilon$, we can build a state $\omega_{\mathcal{N}}\circ \varepsilon$ in $\mathcal{M}$ by composition. Also, for a general state $\omega$ in $\mathcal{M}$ we can define the following relative entropy
\be 
S_{\mathcal{M}}\left(\omega\mid\omega\circ\varepsilon\right)\;.
\ee
If $\mathcal{M}=\mathcal{N}\vee \mathcal{A}$, where ${\cal A}$ is a set of ``charged operators'', then this quantity is a refined measure of uncertainty of $\mathcal{A}$. It is refined because it takes into account correlations between $\mathcal{N}$ and $\mathcal{A}$.

Given such inclusion, further canonical structure arises by taking commutants. If $\mathcal{N}\subset\mathcal{M}$, then $\mathcal{M}'\subset\mathcal{N}'$. This is nicely encoded in the following quantum complementarity diagram 
\begin{eqnarray}\label{qcd}
\mathcal{M} & \overset{\varepsilon}{\longrightarrow} & \mathcal{N}\nonumber \\
\updownarrow\prime\! &  & \:\updownarrow\prime\\ \label{ecr_diagr}
\mathcal{M}' & \overset{\varepsilon'}{\longleftarrow} & \mathcal{N}'\,.\nonumber 
\end{eqnarray}
The vertical direction goes between commutant algebras. The horizontal direction restricts to the target subalgebra. In particular there is a dual conditional expectation $\varepsilon '$ mapping $\mathcal{N}'$ to $\mathcal{M}'$. In this scenario the main theorem of Ref. \cite{Magan:2020ake} states that for any globally pure state, and a particular conditional expectation $\varepsilon$, there exists a dual conditional expectation $\varepsilon'$ such that
\be
S_{\mathcal{M}}\left(\omega|\omega\circ\varepsilon\right)+S_{\mathcal{N}'}\left(\omega|\omega\circ\varepsilon'\right)=\log\lambda \,, \label{cerp}
\ee
where $\lambda$ is the Jones (or algebraic) index of the dual conditional expectations \cite{Longo:1994xe}. This relation was dubbed the \emph{entropic certainty principle}, due to its intimate connection with the uncertainty principle in quantum mechanics.\footnote{The certainty principle was first found in QFT's with global symmetries \cite{Casini:2019kex}. It was then proven in generic inclusions of type I algebras in \cite{Magan:2020ake}, and shortly after extended to type III algebras \cite{hollands2020variational}.} The positivity of relative entropy implies that whenever one relative entropy saturates to the maximal value given by $\log \lambda$, the dual relative entropy necessarily vanishes. The vanishing of one relative entropy, say the dual primed one, implies $\omega=\omega\circ\varepsilon'$. We will use this in the proof.

Let's now assume that the dual inclusions are type III algebras.\footnote{For the computation of entanglement entropy in vacuum this will be exactly the case. For the computation of thermal expectation values associated to the thermofield double this will not be the case but we will still be able of using the algebraic machinery.} Then we can use the subfactor machinery developed in \cite{Longo:1994xe}. In particular we have the Jones ladder associated to the original inclusion
\be
\cdots\supset\mathcal{M}_1\supset\mathcal{M}\supset\mathcal{N}\supset\mathcal{N}_1\supset\cdots\;,
\ee
together with its commutant counterpart
\be
\cdots\subset\mathcal{M}_1'\subset\mathcal{M}'\subset\mathcal{N}'\subset\mathcal{N}_1'\subset\cdots\;.
\ee
The ladder is built by adjoining subsequent Jones projections to the given algebras. For example, $\mathcal{M}_1 =\mathcal{M}\vee e_\mathcal{N}$, where $e_\mathcal{N}\in \mathcal{N}'$ is the Jones projection associated with the inclusion $\mathcal{N}\subset\mathcal{M}$, namely the projector into the Hilbert space generated by $\mathcal{N}$ from the vacuum.

Now, given the inclusion $\mathcal{N}\subset \mathcal{M}$ and a cyclic and separating vector for both algebras $\vert \Omega\rangle$, the ``canonical endomorphism'' \cite{L11} is defined by
\be  
\gamma (\mathcal{M})\equiv j_{\mathcal{N}}j_{\mathcal{M}} (\mathcal{M}) \subset \mathcal{M}\;.
\ee
This uses the modular conjugations, e.g. $j_{\mathcal{M}}(m)\equiv J_{\mathcal{M}} \,m \, J_{\mathcal{M}}$ and similarly for $j_{\mathcal{N}}(n)\equiv J_{\mathcal{N}} \,n \,J_{\mathcal{N}}$, associated with each algebra and the vector $|\Omega\rangle$. This is an endomorphism for $\mathcal{M}$ and indeed one has $\gamma (\mathcal{M})=\mathcal{N}_1$. We can use it to define a canonical endomorphism $\rho$ in the subalgebra by restricting the action $\rho (\mathcal{N})\equiv \gamma\vert_{\mathcal{N}}\subset \mathcal{N}$. We also have $\rho (\mathcal{N})= \mathcal{N}_2$, i.e. the canonical endomorphism jumps two steps on the Jones ladder. All canonical endomorphisms (for different vectors) are unitarily equivalent to each other \cite{Longo:1994xe}. If $\mathcal{N}\subset \mathcal{M}$ has  finite index $\lambda$, then the canonical endomorphism can be expressed as a direct sum of irreducible endomorphisms $\rho_r$ of $\mathcal{N}$ as
\be \label{candec}
\rho \simeq \oplus_r N_r\, \rho_{r} \;,
\ee
where $N_r$ are some multiplicities, and where the identity endomorphism appears only once. This means we can find a set of partial isometries $\omega_r^i\in \mathcal{N}$, $i=1,\cdots, N_r$, satisfying
\be 
\omega_r^{i\dagger}\omega_{j}^{j}=\delta_{ij}\delta_{rs}\,, \quad\sum_{r,i}\, \omega_r^i \,\omega_r^{i\dagger}=\mathds{1} \,,\quad \omega_r^i\, \rho_{r}(n) =\rho(n)\,\omega_r^i\,,\quad n\in \mathcal{N}\;,
\ee
and the canonical endomorphism explicitly reads
\be 
\rho(n)=\sum_{r,i}\, \omega_{r}^i\,\rho_r(n)\,\omega_{r}^{i\dagger}\;.
\ee
The isometry $\omega_{\mathds{1}}\equiv\omega$ intertwining the identity representation with the canonical endomorphism is unique for irreducible subfactors, i.e those for which $\mathcal{N}' \cap\mathcal{M}= \mathds{C}$. Then the conditional expectation is also unique and reads $\varepsilon (m)= \omega^{\dagger} \gamma (m) \omega$. Ref. \cite{Longo:1994xe} then shows that this structure allows to reconstruct $\mathcal{M}$ from $\mathcal{N}$ and a further isometry $v\in \mathcal{M}$. This isometry $v$ intertwines the identity representation and the canonical endomorphism $\gamma$ of $\mathcal{M}$.   It  contains all charges of the model. Explicitly, if $\psi_r^i$ are the partial isometries creating the irreducible endomorphisms, i.e. $\psi_r^i n=\rho_r(n)\psi_r^i$, then this isometry can be written as
\be 
v=\sum_{ri}\omega_{r}^i\,\psi_r^i\;.
\ee
Also, all $m\in \mathcal{M}$ can be written as $m=n\, v$ for some $n\in \mathcal{N}$. The range projection of this partial isometry is the Jones projection $vv^{\dagger}=e_{\mathcal{M}'}\in \mathcal{M}$ for the dual inclusion $\mathcal{M}'\subset \mathcal{N}'$. See \cite{Longo:1994xe} for a detailed explanations of all these facts.

A key aspect of the Jones ladder is that inclusions are anti-isomorphic in a zig-zag manner due to the modular conjugations. In particular we have that $\mathcal{N}_1\subset\mathcal{N}$ is anti-isomorphic to $\mathcal{M}'\subset\mathcal{N}'$ since $j_{\mathcal{N}}(\mathcal{N})=\mathcal{N}'$ and $j_{\mathcal{N}}(\mathcal{N}_1)=\mathcal{M}'$. Therefore the canonical endomorphism of $\mathcal{M}'\subset\mathcal{N}'$ decomposes similarly to \ref{candec} as
\be \label{candec1}
\tilde{\rho} \simeq \oplus_r N_r\, \tilde{\rho}_{r} \;,
\ee
where the tilde signals these are all endomorphisms of $\mathcal{N}'$. The multiplicites $N_r$ and the dimensions $d_r=\tilde{d}_r$ are equal for both inclusions.\footnote{See \cite{Longo:1994xe} for the proper definition of dimension of an endomorphism.} Also, this implies that we have an algebra of dual partial isometries $\tilde{\omega}_r^{i}\in \mathcal{N}'$ intertwining $\tilde{\rho}_{r}$ in $\tilde{\rho}$, and the associated algebra of projectors into the different representations belonging to the relative commutant
\be 
P_r^i\equiv \tilde{\omega}_r^{i} \tilde{\omega}_r^{i\dagger}\in  \mathcal{N}'\cap \tilde{\rho}(\mathcal{N}')'\;.
\ee
As shown in \cite{Longo:1994xe} we have that
\be \label{procon}
\varepsilon (e_\mathcal{N})=\varepsilon (vv^{\dagger})=\frac{1}{\lambda}\mathds{1}\,\,\,\,\,\,\,\,\,\,\,\,\,\,\,\,\,\varepsilon' (P_r^i)=\frac{d_r}{\lambda}\;.
\ee
Given this tight algebraic structure let's come back to the quantum complementarity diagram \ref{qcd} and the certainty relation \ref{cerp}. In this scenario we will look for situations in which the expectation value for the master charge $v$ creating the canonical endomorphism and belonging to $\mathcal{M}$ is maximal. The idea is to bound the associated relative entropy from below  by focusing on the abelian algebra $\mathcal{P}_v$ generated by the projector $p_v \equiv vv^{\dagger}$ and the orthogonal one $\mathds{1}-p_v$. We have
\be
S_{\mathcal{P}_v}\left(\omega|\omega\circ\varepsilon\right)\leq S_{\mathcal{M}}\left(\omega|\omega\circ\varepsilon\right)\leq\log \lambda\;.
\ee
The lower bound comes from monotonicity of relative entropy. The upper bound follows from positivity of relative entropy and the certainty principle, or by the definition of the index in terms of the Pimsner-Popa bound \cite{Longo:1994xe}. If we are able to find $v$ such that $S_{\mathcal{P}_v}\left(\omega|\omega\circ\varepsilon\right)=\log \lambda$, then we have $S_{\mathcal{M}}\left(\omega|\omega\circ\varepsilon\right)=\log \lambda$ and also $S_{\mathcal{M}'}\left(\omega|\omega\circ\varepsilon'\right)=0$. But the later implies $\omega =\omega\circ\varepsilon'$ and therefore
\be 
\omega (P_r^i) =\omega\circ\varepsilon' (P_r^i)=\frac{d_r}{\lambda}\;.
\ee
This in turn implies that the probability of a given representation is 
\be \label{prg2}
p_r=\sum_i \omega (P_r^i)= \frac{N_r d_r}{\lambda}=\frac{N_r d_r}{\sum_r N_r d_r}\;.
\ee
We can apply this structure in two relevant scenarios. They concern the computation of entropic order parameters and symmetry resolved entropies in vacuum and in the thermofield double. Let's start with entanglement entropy in vacuum. We follow \cite{Casini:2019kex}, namely we define entanglement entropy of region $R_1$ through mutual information between two regions $R_1$ and $R_2$ separated by a shell-like region $R'\equiv (R_1\cup R_2)'$ of small size $\epsilon$. We can compute two mutual informations, namely the one in the complete theory and the one in the incomplete theory. These are different because the complete theory includes non-trivial correlations between charged operators located in $R_1$ and $R_2$. More precisely, since the vacuum is invariant under the global conditional expectation, the conditional expectation property for relative entropies gives the relation \cite{longo2018relative,Casini:2019kex}
\be 
I_{\mathcal{C}}(R_1,R_2)=I_{\mathcal{T}}(R_1,R_2)+S_{\hat{\mathcal{T}}_{R}}\left(\omega|\omega\circ\varepsilon\right)\;,
\ee
where the relative entropy is the entropic order parameter of the disconnected region in the incomplete theory. We remind that the incomplete theory has violations of duality in disconnected regions and their complements. This means that
\be
\mathcal{T}(R)\subset \hat{\mathcal{T}}(R)\;.
\ee
The conditional expectation $\varepsilon$ maps the maximal algebra $\hat{\mathcal{T}_{R}}$ to the additive one in this scenario. The associated complementarity diagram is
\newpage
\bea
\hat{\mathcal{T}}(R) & \overset{\varepsilon}{\longrightarrow} &\mathcal{T}(R)\nonumber \\
\updownarrow\prime \!\! &  & \,\updownarrow\prime\\
\mathcal{T}(R')& \overset{\varepsilon'}{\longleftarrow} & \hat{\mathcal{T}}(R')'\,.\nonumber 
\eea
 To compute the entropic order parameter we focus on the algebras of projectors $\mathcal{P}_{v_{R_1}}$ and $\mathcal{P}_{v_{R_2}}$, generated by  $P_{v_{R_1}}\equiv v_{R_1}v_{R_1}^{\dagger}\in \mathcal{C} (R_1)$ and $P_{v_{R_2}}\equiv v_{R_2}v_{R_2}^{\dagger}\in \mathcal{C} (R_2)$ respectively. We can maximize the correlation by thinking in the limit where the two regions touch each other. Then the modular reflection is geometrical and it is natural to choose
\be 
v_{R_1}= J v_{R_2} J\;.
\ee
The expectation value of the projectors is then
\be 
\omega (P_{v_{R_1}} P_{v_{R_2}})=\omega (P_{v_{R_1}} J P_{v_{R_1}} J)=\omega (P_{v_{R_1}} J P_{v_{R_1}} J)=\omega (P_{v_{R_1}} \Delta^{1/2} P_{v_{R_1}} )\;,
\ee
where $\Delta$ is the modular operator associated with the vacuum and the region $R_1$. Ref. \cite{Casini:2019kex} explains why charged operators can be chosen so that they commute with the modular operator in the limit. Indeed this can only be done
in an approximate sense, choosing the charged operators supported in modes of the
modular Hamiltonian with modular energy tending to zero. A proof can be found in \cite{hollands2020variational,longo2018relative}. For such modes we then have
\be 
\omega (P_{v_{R_1}} P_{v_{R_2}})=\omega (P_{v_{R_1}} \Delta^{1/2} P_{v_{R_1}} )\simeq \omega (\Delta^{1/2} P_{v_{R_1}}  P_{v_{R_1}} )=\omega ( P_{v_{R_1}}  )=\omega\circ\varepsilon ( P_{v_{R_1}}  )=\frac{1}{\lambda}\;.
\ee
This implies the entropic order parameter saturates to its maximum $S_{\hat{\mathcal{T}}_{R}}\left(\omega|\omega\circ\varepsilon\right)=\log\lambda$ and the entropic disorder vanishes $S_{\hat{\mathcal{T}}_{R'}}\left(\omega|\omega\circ\varepsilon'\right)=0$. A more transparent explanation of the vanishing of the entropic disorder parameter comes from the fact that in the limit in which the two regions touch each other, the complementary region pinches. In that limit all operators in the complementary region have zero expectation values, except for the identity of course. From both arguments we arrive at the same conclusion: the probabilities of the different sectors is \ref{prg} and the symmetry resolved entropy in these general scenarios is $S_r=S+\log p_r$. Equivalently, the density matrix is a direct sum of blocks associated with the different projectors, each block with a classical probability given by $p_r$.

A very similar story goes along for the expectation values of projectors in the thermal ensemble. We start from the thermofield double state, defined in the tensor product of two QFT's as
\begin{equation}
\vert\textrm{TFD}\rangle =Z(\beta)^{-1/2}\sum\limits_{i}e^{-\beta E_{i}/2}\vert E_{i}^{R},E_{i}^{L}\rangle\,,
\end{equation}
where $L,R$ corresponds to the Left/Right QFT and play the role of region $R_1$ and $R_2$ in the previous scenario. Thermal expectation values of projector operators acting on the right (or left) system can be written as expectation values in the thermofield double state
\be 
p_r^i= \langle P_r^i\rangle_\beta=\langle\textrm{TFD}\vert \,P_r^i\, \vert\textrm{TFD}\rangle\;.
\ee
The only difference with the previous scenario is that the complementarity diagram changes to
\bea
\mathcal{T}_{L}\vee\mathcal{T}_{R}\vee p_v & \overset{\varepsilon}{\longrightarrow} &\mathcal{T}_{L}\vee\mathcal{T}_{R}\nonumber \\
\updownarrow\prime \!\! &  & \,\updownarrow\prime\\
1& \overset{\varepsilon'}{\longleftarrow} & P_r^i\,.\nonumber 
\eea
This diagram is saying something very simple. In the upper right we have the algebra of the incomplete theory. Since it covers the full spacetime it is a type I algebra. Also, although there is not geometrical complementary region, this algebra can have a center. This center, which is the commutant of the incomplete algebra, contains the projectors into the different superselection sectors, namely the  $P_r^i$. The dual conditional expectation $\varepsilon'$ then kills this dual abelian algebra of projectors, mapping it to the identity. Equivalently, the conditional expectation gives a  particular state on the abelian projector algebra. This $\varepsilon'$ is defined as before \ref{procon}. The commutant of the identity is the full operator algebra, which is the incomplete one plus the Jones projection.

To derive the desired relation at high temperature,  we again need to choose the master charged operators at each side of the thermofield double to be the modular reflected ones. In this case the expectation value of the correlator of Jones projections at each side of the TFD, denoted by $\omega$ below, is 
\be 
\omega (P_{v_{L}} P_{v_{R}})=\omega (P_{v_{L}} J P_{v_{L}} J)=\omega (P_{v_{L}} J P_{v_{L}} J)=\omega (P_{v_{L}} \Delta^{1/2} P_{v_{L}} )\;,
\ee
where $\Delta=e^{-\beta H}$ is the modular operator associated with the TFD. In this case, the high temperature limit takes $\Delta$ to one. We obtain
\be 
\omega (P_{v_{R_1}} P_{v_{R_2}})\xrightarrow[\beta\rightarrow 0]{}  \omega (P_{v_{L}} )=\omega\circ\varepsilon ( P_{v_{L}}  )=\frac{1}{\lambda}\;.
\ee
The understanding of this limit then generalizes the theorem in \cite{magan2021proof}, giving the following equivalent results, holding for a  compact spatial manifold in the high temperature limit:

\begin{itemize}

\item The average density of states of a certain representation $r$ at high energies has the following form
\be
\rho_r(E)=\frac{N_r d_r}{\lambda}\,\rho(E)\,,
\ee  
where $\rho(E)$ is the density of states at energy $E$,  $\lambda=\sum_r N_r d_r$ and $N_r$ are the multiplicities appearing in the canonical endomorphism associated to the subfactor.

\item The expectation values of projectors are
\be
\lim_{\beta\rightarrow 0} \langle P_r^i \rangle= \frac{d_r}{\lambda}\;.
\ee

\item The thermal state has the following decomposition
\be 
\rho_{\beta}\xrightarrow[\beta\rightarrow 0]{}  \oplus_r\,\frac{N_r d_r}{\lambda} \,\rho_{\beta}^r\;.
\ee

\item The thermal entropy has the following decomposition
\be 
S(\rho_{\beta}) =\sum_r  \frac{N_r d_r}{\lambda}\, S(\rho_{\beta}^r)+S(p_r)\;,
\ee
where 
\be 
S(\rho_{\beta}^r)=S(\rho_{\beta})+\log \frac{N_r d_r}{\lambda}\;.
\ee
The quantity $S(\rho_{\beta}^r)$ is called the symmetry resolved entropy.

\item The entropic order parameter in the TFD saturates to its maximum value
\be 
\lim_{\beta\rightarrow 0} S_{\mathcal{T}_{L}\vee\mathcal{T}_{R}\vee p_v}(\omega|\omega\circ \varepsilon)=\log \lambda\;.
\ee

\item The entropic disorder parameter in the TFD vanishes
\be 
\lim_{\beta\rightarrow 0} S_{(\mathcal{T}_{L}\vee\mathcal{T}_{R})')}(\omega|\omega\circ \varepsilon')=0\;.
\ee

\end{itemize}

\bibliographystyle{utphys}
\bibliography{EE}

\end{document}